\numberwithin{equation}{section}
\newtheorem{theorem}{Theorem}[section]
\newtheorem{lemma}[theorem]{Lemma}
\newtheorem{corollary}[theorem]{Corollary}
\theoremstyle{definition}
\newtheorem{remark}[theorem]{Remark}
\newtheorem{definition}[theorem]{Definition}
\begin{document}

\baselineskip=2\baselineskip

\title{Minimizing minor embedding energy: an application in quantum annealing}
\author{Yan-Long Fang\thanks{YF:
London Centre for Nanotechnology,
University College London,
19 Gordon Street,
London WC1H~0AH,
UK;
yanlong.fang@ucl.ac.uk.
}
\and
%Simone Severini\thanks{SS:
%Department of Computer Science,
%University College London,
%Gower Street 66-72,
%London WC1E~6EA,
%UK;
%‎s.severini@ucl.ac.uk	,
%}
%\and
P.A. Warburton\thanks{PAW:
London Centre for Nanotechnology,
University College London,
19 Gordon Street,
London WC1H~0AH,
UK;
p.warburton@ucl.ac.uk.
}}

%%%% this removes footnote "crosses" by the authors' names
\renewcommand\footnotemark{}
%%%%

%\date{version 43; 1 July 2013; \LaTeX{ed} \today}

\date{\today}

\maketitle
\begin{abstract}
A significant challenge in quantum annealing is to map a real-world problem onto a hardware graph of limited connectivity. If the maximum degree of the problem graph exceeds the maximum degree of the hardware graph, one employs minor embedding in which each logical qubit is mapped to a tree of physical qubits. Pairwise interactions between physical qubits in the tree are set to be ferromagnetic with some coupling strength $F<0$. Here we address the question of what value $F$ should take in order to maximise the probability that the annealer finds the correct ground-state of an Ising problem. The sum of $|F|$ for each logical qubit is defined as minor embedding energy. We confirm experimentally that the ground-state probability is maximised when the minor embedding energy is minimised, subject to the constraint that no domain walls appear in every tree of physical qubits associated with each embedded logical qubit. We further develop an analytical lower bound on $|F|$ which satisfies this constraint and show that it is a tighter bound than that previously derived by Choi (Quantum Inf. Proc. 7 193 (2008)). 

%
%One of the challenging problems for making applications of adiabatic quantum computation (AQC) is known as minor embedding. We study the minor embedding related to an Ising spin-$\frac{1}{2}$ system. The first mathematical bound for the ferromagnetic coupling strength within the chains is given by Choi \cite{Choi2008}, where a lower bound for the absolute value of the ferromagnetic coupling strength is given. We introduce the concept of admissible minor embeddings and prove that our result gives the tightest bound for admissible minor embeddings. Moreover, we use job-shop scheduling problems as illustrations to highlight the importance of the best bound in real annealers. Different optimal bounds are compared with experimental data from \cite{Venturelli2015}. {\color{red} Will come back to this later.}

\

{\bf Keywords: }
Minor embedding; adiabatic quantum computing; job-shop scheduling.

\

\end{abstract}

\newpage
\tableofcontents

\section{Introduction}
\label{Introduction}
Quantum annealing is a widely-used tool for solving quadratic optimization problems \cite{Harris2018,King2018}. The problem is mapped to a Hamiltonian, $H_P$, whose ground-state encodes the optimized solution. Exploration of the potential landscape is driven by quantum fluctuations described by a driver Hamiltonian, $H_D$. The overall system Hamiltonian $H_{total}$ is a time-varying weighted sum of $H_P$ and $H_D$ such that at the end of the annealing process the quantum fluctuations are suppressed and $H_{total} = H_P$. A typical annealing schedule is of the form
\begin{equation}
\label{adiabatic H}
H_{total}(s)=A(s) H_D+B(s) H_P\;,
\end{equation}
where $0\le s\equiv \frac{t}{t_f}\le 1$, $t$ is time, $t_f$ is the duration of the anneal, $A(0)\gg B(0)$ and $A(1)\ll B(1)$. The origin of quantum annealing goes back to the quantum adiabatic theorem with a gap condition, which was first shown by Born and Fock \cite{BornFock1928} in 1928, then Kato \cite{Kato1950} simplified the proof of the theorem and extended it to allow degenerate eigenstates and eigenvalue crossings. For closed quantum systems, Farhi et al. \cite{Farhi2000,Farhi2001} proposed adiabatic quantum computation as an alternative to tackle NP-complete problems. For a recent review of the quantum adiabatic theorem, see for exmaple Albash and Lidar \cite{Lidar2018}.

%Let us consider the Schr\"{o}dinger equation in the limit $t_f\to \infty\,$:
%\begin{equation}
%\label{Schrodinger equation}
%i\hbar\frac{\partial }{\partial s} |\varphi\rangle=H_{total}|\varphi\rangle\;.
%\end{equation}
%If a quantum system evolutes with the Hamiltonian $H(s)$ given in the expression \eqref{adiabatic H} and $|\varphi(0)\rangle =|\varphi_g(H_I)\rangle$ is initially at the ground-state of $H_I$, then in virtue of the adiabatic quantum theorem, the final state $|\varphi(1)\rangle =e^{i\theta}|\varphi_g(H_P)\rangle$ is at the ground-state of the problem Hamiltonian $H_P$. Here $\theta$ is known as the dynamic phase and Berry phase is absorbed in $|\varphi_g(H_P)\rangle$.

In view of the computational complexity of modelling interacting quantum systems using classical computational resources, a potentially efficient way to find the ground-state of $H_P$ is to engineer a physical system whose dynamics follow that of equation \eqref{adiabatic H}. One such physical system is based on a system of superconducting flux qubits with tunable inductive interactions\cite{Kafri2017}. In this implementation the problem Hamiltonian is of the Ising form:
\begin{equation}
\label{adiabatic Ising HP}
H_P=\sum_i h_i \sigma_i^z +\sum_{ij\in E(G)}J_{ij}\sigma_i^z \sigma_j^z \;.
\end{equation}
Here $\sigma_i^z$ is the quasi-spin of qubit $i$ (corresponding to its flux state) and $G$ is a graph describing all possible two-qubit interactions. The total Hamiltonian is exactly the transverse Ising model introduced by Kadowaki and Nishimori\cite{Nishimori1998}, which is a quantum analogue of classical simulated annealing. Moreover, many NP-hard problems can be translated into Ising Hamiltonians \cite{Lucas2014}. Now the expression \eqref{adiabatic H} becomes 
\begin{equation}
\label{adiabatic Ising}
H_{total}(s)=A(s) \sum_i \tilde{h}_i \sigma_i^x+B(s) \left(\sum_i h_i \sigma_i^z +\sum_{ij\in E(G)}J_{ij}\sigma_i^z \sigma_j^z \right)\;.
\end{equation}

One problem for hardware implementation of quantum annealing now becomes immediately apparent: for a system of $N$ qubits it is at best very difficult to engineer direct interactions between all $\frac{1}{2}N(N-1)$ pairs. In current implementations of flux-qubit quantum annealers the maximum degree of the hardware graph is 6 -- i.e. each qubit is directly coupled to at most six other qubits\footnote{experiments are currently underway on a flux-qubit annealer with degree $15$}. It is therefore necessary to employ \textit{minor embedding} -- i.e. to embed an Ising problem Hamiltonian whose connectivity graph has degree $D_P$ onto physical hardware with connectivity graph of degree $D_H$, where $D_H < D_P \le N$. The requirement of this embedding is that the ground-state of the embedded Hamiltonian of degree $D_H$ encodes the same solution as the ground-state of the problem Hamiltonian of degree $D_P$. 

Choi \cite{Choi2008} first proposed a method for minor embedding in which each logical qubit is replaced by a tree of physical qubits. All the physical qubits within each tree are constrained to be in the same spin state (which in turn is the spin state of the logical qubit) by the implementation of ferromagnetic interactions of magnituede $|F|$ at each edge of the tree. In practice it is usual to use a one-dimensional chain of physical qubits as the tree for minor embedding. A logical qubit consisting of a chain of $L$ physical qubits in a hardware graph of degree $D_H$ can now be directly coupled to $L (D_H - 2) + 2$ other logical qubits, thereby greatly increasing the connectivity. Figure \ref{minorfig} shows an example of a minor embedding.
\begin{figure}[h]
	\centering
	\includegraphics[scale=0.4]{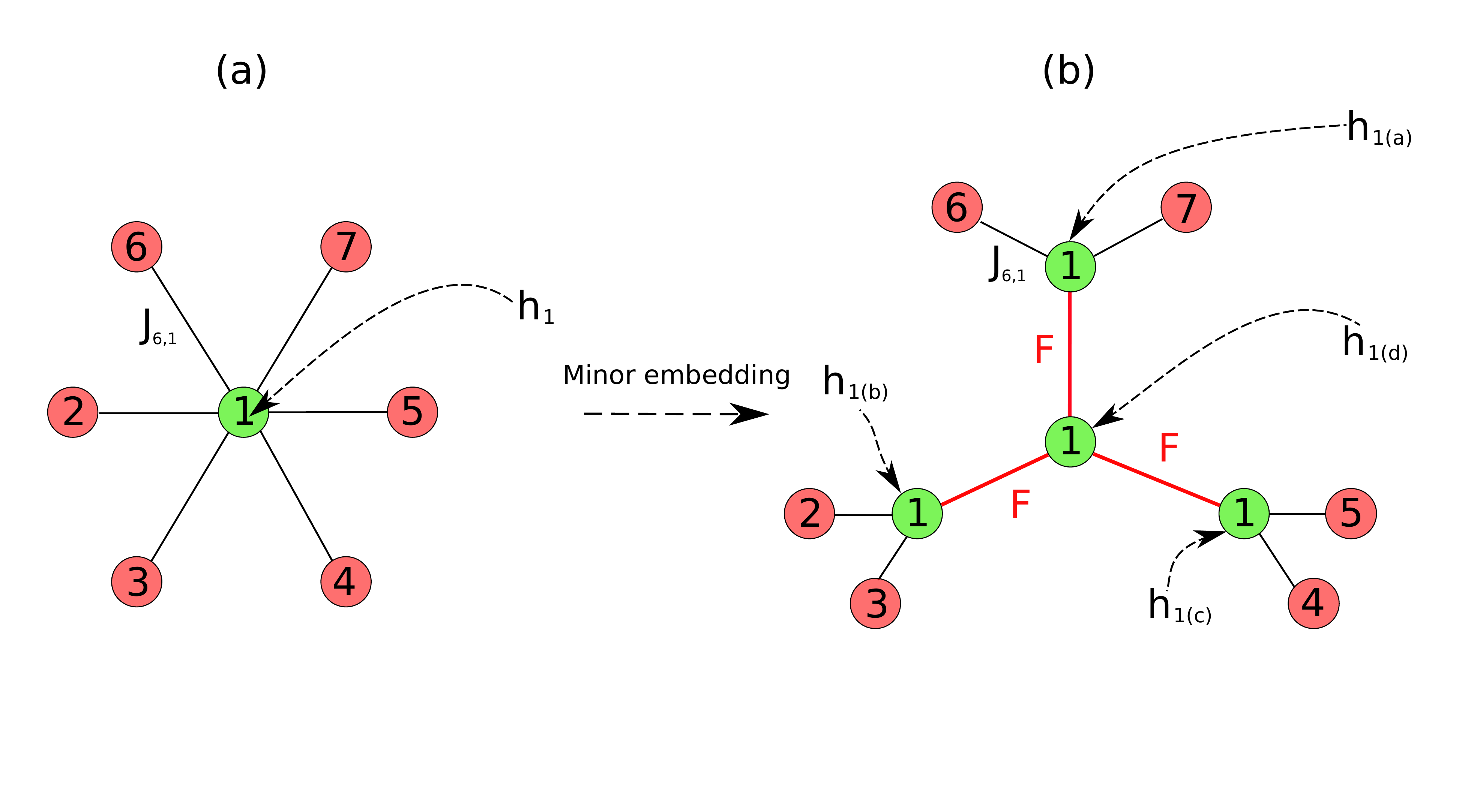}
	\caption{ An illustrative example of (a) a logical graph of maximum degree $6$ and (b) a physical graph of maximum degree $3$. Logical qubit $1$ (coloured in green in (a)) is mapped onto four physical qubits (all labelled by $1$ and coloured in green in (b)). $J_{6,1}$ in (a) denotes the coupling between the sixth logical qubit and the first qubit, which is mapped identically onto (b). $h_1$ in (a) is the local field on the first logical qubit, which is mapped onto $h_{1(a)}, h_{1(b)}, h_{1(c)} \& h_{1(d)}$ in (b). Other couplers and local fields are omitted for clarity.}
	\label{minorfig}
\end{figure}

If $|F|$ is sufficiently large, for a closed-system quantum annealer it can be assumed that the ferromagnetic bonds between each physical qubit in the embedded logical qubit are never broken, ensuring that all the physical spins are mutually aligned. In a real quantum annealer, however, thermal fluctuations and other noise mechanisms may break ferromagnetic bonds resulting in domain walls between locally aligned regions. In this case the value of the logical spin cannot be unambiguously determined (although majority vote may be used to estimate it). In such a real quantum annealer therefore the probability that the embedded Hamiltonian anneals to the correct ground-state depends upon the probability of domain walls forming, which in turn is a function of the strength, $F$, of the ferromagnetic interaction between the physical qubits in the embedded tree. While at first sight it might appear that the ground-state probability is monotonic in $F$, in a real quantum annealer the maximum absolute coupling strength between any pair of physical qubits is finite. (In a flux qubit annealer, for example, this maximum coupling is determined by the magnitudes of the persistent current and mutual inductances.) Arbitrary increases in the embedding ferromagnetic coupling strength normalized with respect to the energy scale of the problem Hamiltonian can therefore only be achieved by reducing the latter. This in turn leads to an increase in computational errors from thermal transitions to an excited state. Furthermore, if $F$ is too small, domain walls will be present unavoidably. This suggests that there is an optimum value for the embedding ferromagnetic coupling strength for any given embedding of the problem Hamiltonian. See Appendix \ref{Job-shop scheduling problems on the D-Wave Machine} for experimental confirmation of this supposition.
\

Several strategies for parameter setting on quantum annealers are developed by Pudenz in \cite{Pudenz} to understand how the ferromagnetic coupling strength (within embedded chains) would affect the probability of finding ground-states on the D-Wave DW2 and DW2X machine. Pudenz's work focuses on mixed satisfiability problems. It shows that higher ferromagnetic coupling strengths do not increase the chance of finding the ground-state on either machine. Moreover, different strategies for setting the logical field magnitude $h_{i(k)}$ within the chains yield different performance. In particular, the so-called single distribution method is less effective than other methods. This is due to the fact that non-admissible minor embeddings are more likely to be used in the single distribution method -- see Remark \ref{rmk on broken chains} below for details. Venturelli et.al \cite{Venturelli2015} studied the Sherrington-Kirkpatrick Model (SKM) on the D-Wave DW2 machine. They experimentally confirmed the non-monotonic dependence of the ground-state probability on $F$ by using the D-Wave quantum annealer for up to $N = 30$ fully-connected logical spins.
%One will see that the quantity $C(i)$ defined in \eqref{def C} is large for SKM. More importantly, it growth with the size of problems. This suggests that the shifting behaviour of optimal coupling strengths in SKM is related to $C(i)$. In contrast, we will see later in Section \ref{Experimental results} that the $C(i)$ is fixed for job-shop scheduling problems. This is similar to three-satisfiability problems, whose $C(i)$ is also fixed.

In this paper we revisit minor embedding in order to determine the optimum ferromagnetic strength $|F|$ for embedding trees in quantum annealers at finite temperature. We will give a mathematical criterion for the best bound on the value of $|F|$. As a consequence, the first two theorems by Choi \cite{Choi2008} will follow immediately. It is not hard to see that Choi's first paper in minor embedding \cite{Choi2008} gives the foundation for the Chimera architecture of D-Wave machines given in \cite{Choi2011}. Moreover, methods to generate minor embedding on the Chimera graph can be found in \cite{Boothby2015}. Therefore, we focus here on the analysis of minor embeddings rather than on architectures of quantum annealers. Moreover, we will see in Subsection \ref{Admissible minor embeddings} that condition \eqref{embedding condition 1.1} will influence the bound of $|F|$. Our results can be applied to any architecture as long as the Ising nature is preserved. Here the Ising nature should be understood in the broad sense. i.e. including higher order interaction terms. It is known that Hamiltonians with higher-order interactions can be reproduced via a two-body Hamiltonian (see e.g. \cite{Warburton2017}). In order to achieve multi-body interactions via two-body Ising models, one has to couple logical qubits with ancilla qubits, which certainly increases the (vertex) degree of the corresponding two-body Hamiltonian. Minor embedding is the key tool to convert graphs with higher degrees to graphs with lower degrees. Therefore, our paper will also be useful for generating multi-body interactions.

It still remains open to model the open system effectively. A simplified version can be found in \cite{Lidar2012}, where a system-bath Hamiltonian is studied in detail. The Hamiltonian is given by
$$
H(t)=H_S(t)+H_B+H_I\,,
$$
where $H_S$, $H_B$ and $H_I$ correspond to the adiabatic system, bath and interaction Hamiltonians respectively. Note that $H_I=g\sum A_\alpha\otimes B_\alpha$. This special feature enable us to use a perturbative method for small $g$ as shown in the paper \cite{Lidar2012}. However, if $g$ depends non-trivially on the strong coupling, $F$, introduced by $H_S(t)$, then $g$ might become large for large $F$. Consequently, small order perturbations will not be enough to analyse the behaviour of the system. Therefore, if $H_S=H_{total}$ and we want to use the model in \cite{Lidar2012}, we need to minimise the strong coupling, $F$, in $H_{total}$ without destroying the Ising problem $H_P$ in \eqref{adiabatic H}. This give another motivation for us to search for the minimum coupling strength in $H_P$.

\section{Main results}
\label{Main results}
\subsection{Preparatory material}
\label{Preparatory material}
Firstly, we give a formal definition for minor embedding.

\begin{definition}
	\label{def of minor}
	A minor-embedding\cite{Choi2008} is a pair of mappings $(\iota,\tau)=:I$ that maps a graph $G$ to a sub-graph of another graph $U$. The pair of mappings satisfies the following properties:
	\begin{itemize}
		\item $\iota: V(G) \mapsto V(U)$ each vertex $i$ in $V(G)$ is mapped to a set of vertices (denoted by$\iota(i)$) of a connected sub-tree of $U$\,,
		\item $\tau: V(G) \times V(G) \mapsto V(U)$ such that for each $ij\in E(G)$, $\tau(i,j)\in \iota(i)$ and $\tau(j,i)\in \iota(j)$ fulfilling $\tau(i,j)\tau(j,i)\in E(U)$. Note that $\tau$ induces the mapping of edges, which we also denote by $\tau$\,.
	\end{itemize} 	
\end{definition}
Note that given graphs $G$ and $U$, there may be no minor embedding of $G$ into $U$ or there may exist many $(\iota,\tau)$'s that embed $G$ into $U$. For instance, by Kuratowski’s theorem the complete bipartite graph $K_{3,3}$ cannot be minor embedded into any planar graph. Figure \ref{minorfig} illustrates how to embed a highly connected graph into a less connected graph.

Let $G$ be the logical graph corresponding to expression \eqref{adiabatic Ising HP}. To show its dependence on $G$, we suppress the subscript $P$ and rewrite the expression as
\begin{equation}
\label{adiabatic Ising G}
H_G=\sum_{i\in V(G)} h_i \sigma_i^z +\sum_{ij\in E(G)}J_{ij}\sigma_i^z \sigma_j^z \;.
\end{equation}
Suppose that there is another graph $U$, which we can interpret as the hardware graph. Moreover, we assume that graph $G$ can be minor embedded onto graph $U$. Then Definition \ref{def of minor} induces a series of problem Hamiltonians associated with graph $I(G)\subset U$:
\begin{equation}
\label{adiabatic Ising U}
H_{I(G)}=\sum_{i\in V(G)} \left(\sum_{k\in V(\iota(i))}h_{i(k)} \sigma_{i(k)}^z 
+
\sum_{i_p i_q\in E(\iota(i))} F^{pq}_i \sigma_{i_p}^z\sigma_{i_q}^z\right) 
+
\sum_{ij \in E(G)}J_{ij}\sigma_{\tau(i,j)}^z \sigma_{\tau(j,i)}^z \;,
\end{equation}
where
\begin{equation*}
\label{embedding condition 1}
\sum_{k\in V(\iota(i))}h_{i(k)} = h^\prime_i\,,
\end{equation*}
and the ferromagnetic coupling strength (also called internal coupling strength) within each sub-tree $\iota(i)$ is bounded from above.
\begin{equation}
\label{embedding condition 2}
F^{pq}_i < -M_i\,, \qquad \text{for some non-negative $M_i$}\,.
\end{equation}
In order to match the ground-state of Hamiltonian \eqref{adiabatic Ising G} and that of Hamiltonian \eqref{adiabatic Ising U}, we can set $h^\prime_i = h_i$, which gives
\begin{equation}
\label{embedding condition 1.1}
\sum_{k\in V(\iota(i))}h_{i(k)}= h_i\,.
\end{equation}
We also require that $M_i$ be sufficiently large that all spins in the ground-state of the embedded tree are aligned.

\

A natural question to ask is:
\emph{\textbf{How small can $M_i$ be?}}

Let $\mathscr{E}_G$ be the energy corresponding to Hamiltonian \eqref{adiabatic Ising G} and $\mathscr{E}_{I(G)}$ for Hamiltonian \eqref{adiabatic Ising U}. Then we have 
\begin{equation}
\label{adiabatic Ising energy G}
\mathscr{E}_G(s_1,\dots,s_N)=\sum_{i\in V(G)} h_i s_i +\sum_{ij\in E(G)}J_{ij} s_i s_j \;,
\end{equation}
and
\begin{multline}
\label{adiabatic Ising energy U}
\mathscr{E}_{I(G)}\left(s_{1(1)},\dots,s_{1(|\iota(1)|)},\dots,s_{N(|\iota(N)|)}\right)
\\
=\sum_{i\in V(G)} \left(\sum_{k\in V(\iota(i))}h_{i(k)} s_{i(k)} 
+
\sum_{i_p i_q\in E(\iota(i))} F^{pq}_i s_{i_p} s_{i_q}\right) 
+
\sum_{ij \in E(G)}J_{ij} s_{\tau(i,j)} s_{\tau(j,i)} \;.
\end{multline}

\begin{definition}[Minor embedding energy]
Let $I=(\iota,\tau)$ be a minor embedding. Then its minor embedding energy (MEE) is defined by
$$
\mathscr{E}_{I_{MEE}}:=\sum_{i_p i_q\in E(\iota(i))} |F^{pq}_i|\,.
$$
\end{definition}
\noindent Note that minimizing $M_i$ for each logical qubit $i$ is equivalent to minimizing the minor embedding energy.

\subsection{Main theorem}
Our task is to find the mathematical criteria for all the bounds that preserve the ground-state configuration of Hamilton \eqref{adiabatic Ising G}. Now we will focus on the criteria for tree $\iota(i)$. 
\begin{definition}[Boundary operator]
	Let $X$ be a graph and $2^{X}$ denote the power set of $V(X)$. The boundary operator 
	$$\partial : 2^{X} \mapsto E(X)$$
	is defined as that for any $W \subset V(X)$, $\partial W$ gives the boundary edges of $W$. That is the cut(s) between $W$ and $X\backslash W$. Moreover, the boundary operator $\partial$ annihilates both the empty set and the total set $V(X)$.
\end{definition}
We will see later that the boundary operator has a strong relationship with the ferromagnetic coupling strength. For a graph with assignments (local $h$-field) on each vertex, we define the following integral operator.
\begin{definition}[$h$-integral operator]
	\label{h integral operator}
	Let $X$ be a graph. The $h$-integral operator
	\[
	h: V(X) \mapsto \mathbb{R}
	\]
	is defined as
	\[
	h(W)=\sum_{k\in V(W)} h_k\qquad\qquad\text{for any $W \subset X$}\,.
	\]
\end{definition}
Similarly, we can define the $J$-integral operator for other non-negative external field.
\begin{definition}[$J$-integral operator]
	Let $X$ be a graph. The $J$-integral operator
	\[
	J: V(X) \mapsto \mathbb{R}_+
	\]
	is defined as
	\[
	J(W)=\sum_{k\in V(W)} J_k\qquad\qquad\text{for any $W \subset X$}\,.
	\]
\end{definition}
At least one domain wall is present when there is the presence of an inhomogeneous spin configuration in $\iota(i)$ or equivalently the presence of an anisotropic magnetization. 
\begin{definition}[Domain wall]
If all particles have the same spin in $W_i\subset \iota(i)$ but opposite spin in $\iota(i)\backslash W_i$, then $\partial W_i$ is the domain wall associated with $W_i$.
\end{definition}
\noindent We say a domain wall $\partial W_i$ is positive (negative), if the spins are positive (negative) within $W_i$. 

Let us denote $\operatorname{Onbh}(i(k))$ the original neighbourhood of the pre-embedded vertex $i$ that is connected to the embedded vertex $i(k)$.

Now we are ready to state our main theorem.
\begin{theorem}
	\label{main thm}
	Let $h_{i(k)}$ be the local fields and $J_{i(k)}:=\sum_{l\in \operatorname{Onbh}(i(k))}|J_{l,i(k)}|$ be the non-negative external fields on $\iota(i)$. Let $M_i$ be the constant defined in \eqref{embedding condition 2} satisfying
	\begin{equation}
	\label{main thm eqn 1}
	M_i \ge \underset{W_i}{\operatorname{max}}\left( \frac{1}{|\partial W_i|}\operatorname{min}\Big\{|h(W_i)-J(W_i)|,|h(W_i)-h_i-J(\iota(i)\backslash W_i)| \Big\}\right)\,,
	\end{equation}
	where the maximum is taken from all $\emptyset\ne W_i \subsetneq \iota(i)$. Then we have 
	\begin{equation}
	\label{main thm eqn 2}
	s^*_{i_p}s^*_{i_q}=1\,, \qquad \text{for all $i_p i_q\in E(\iota(G))$}\,,
	\end{equation}
	and 
	\begin{equation}
	\label{main thm eqn 3}
	\operatorname{min}\mathscr{E}_{I(G)}\left(s^*_{1(1)},\dots,s^*_{1(|\iota(1)|)},\dots,s^*_{N(|\iota(N)|)}\right)=\mathscr{E}_G(s^*_1,\dots,s^*_N)\,,
	\end{equation}
	where $s^*_k=s^*_{k(j)}$, for all $j\in \iota(k)$.
\end{theorem}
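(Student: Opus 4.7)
The plan is to establish \eqref{main thm eqn 2} by contradiction and then obtain \eqref{main thm eqn 3} as an immediate consequence. The bound \eqref{main thm eqn 1} is engineered so that any spin configuration containing a domain wall inside some tree $\iota(i)$ can be strictly improved by flipping one side of the wall, contradicting optimality.

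To begin, fix a ground state $s^*$ of $H_{I(G)}$ and suppose, for contradiction, that some tree $\iota(i)$ contains a frustrated edge. Set $W_i := \{k \in V(\iota(i)) : s^*_{i(k)} = +1\}$; by hypothesis both $W_i$ and $W_i^c := V(\iota(i)) \setminus W_i$ are non-empty, so $\partial W_i$ coincides exactly with the set of domain-wall edges of $\iota(i)$. Compare $s^*$ against two candidate configurations: (a) flip every spin of $W_i$ to $-1$, or (b) flip every spin of $W_i^c$ to $+1$. Each candidate removes all domain walls from $\iota(i)$ while leaving every other tree unchanged.

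The heart of the proof is a bound on the energy differences $\Delta E_a$ and $\Delta E_b$, assembled from three contributions. First, the ferromagnetic edges in $\partial W_i$ turn from frustrated to satisfied, contributing strictly less than $-2 M_i |\partial W_i|$ by \eqref{embedding condition 2}. Second, the local $h$-fields on the flipped vertices reverse sign, contributing $-2 h(W_i)$ in case (a) and, by \eqref{embedding condition 1.1}, $2 (h_i - h(W_i))$ in case (b). Third, each external coupling at a flipped vertex has its spin product reversed; since spins in neighbouring trees are arbitrary, the worst-case total external change is $+2 J(W_i)$ in case (a) and $+2 J(W_i^c)$ in case (b). Summing, $\Delta E_a < -2 M_i |\partial W_i| + 2 (J(W_i) - h(W_i))$ and $\Delta E_b < -2 M_i |\partial W_i| + 2 (J(W_i^c) + h_i - h(W_i))$.

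Now \eqref{main thm eqn 1} enters: since $|x| \ge x$ for every real $x$, the assumed bound forces $M_i |\partial W_i|$ to dominate whichever of $J(W_i) - h(W_i)$ or $J(W_i^c) + h_i - h(W_i)$ realises the $\min$ in \eqref{main thm eqn 1}. Hence at least one of $\Delta E_a < 0$ or $\Delta E_b < 0$ holds, contradicting optimality of $s^*$, and iterating over all trees delivers \eqref{main thm eqn 2}. For \eqref{main thm eqn 3}, aligned configurations of $I(G)$ are in natural bijection with configurations of $G$ via $s_{i(k)} \equiv s_i$; under this identification the ferromagnetic terms in \eqref{adiabatic Ising energy U} collapse to the fixed constant $\sum_i \sum_{i_p i_q \in E(\iota(i))} F^{pq}_i$, while the remaining terms reproduce \eqref{adiabatic Ising energy G}, so the two minimisation problems coincide up to that additive constant. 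The main obstacle is verifying, sign regime by sign regime, that the $\min$ in \eqref{main thm eqn 1} really does select a flip whose $\Delta E$ is non-positive; a related subtlety is that the definition $J_{i(k)} := \sum_l |J_{l,i(k)}|$ via absolute values is exactly what makes the worst-case external bound valid uniformly over configurations outside $\iota(i)$.
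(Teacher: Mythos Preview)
Your argument is correct and follows essentially the same route as the paper. The paper first isolates a lemma showing that, for any fixed $W_i$, the bound in \eqref{main thm eqn 1} specialised to that $W_i$ forbids the configuration $(W_i(+),\overline{W}_i(-))$ from being a ground state, and then obtains the theorem by taking the maximum over $W_i$; you compress these two steps by choosing $W_i$ from the outset to be the actual $+1$ set of the putative ground state, which is exactly the unique subset to which the lemma needs to be applied. Your treatment of \eqref{main thm eqn 3}, noting the additive constant $\sum_i\sum_{i_pi_q\in E(\iota(i))}F^{pq}_i$, is in fact more explicit than the paper's.
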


\begin{remark}\label{main thm rmk}
\

\begin{itemize}
	\item If certain conditions are satisfied, then the bound given in inequality \eqref{main thm eqn 1} is valid for the worst-case scenario i.e. it takes into account all possible spin configurations in the neighbourhood of the logical qubit. See Subsection \ref{Tightness of the bound} for details.
	\item It gives the necessary condition such that $M_i$ will preserve the equivalence of ground-states for $\mathscr{E}_{I(G)}$ and $\mathscr{E}_G$. Moreover it is the necessary condition for the $h_{i(k)}$'s and $J_{i(k)}$'s being pre-defined. Hence $M_i$ depends on $h_{i(k)}$ and $J_{i(k)}$. In practice, the $J_{i(k)}$'s are defined for a given minor embedding. However, the $h_{i(k)}$'s need to be determined. Therefore, the true optimal $M_i$ should be
	\begin{equation*}
	M_i=\min_{h_{i(k)}} M_i\left(h_{i(k)}\right)\,,
	\end{equation*}
	provided that some conditions are satisfied, see Section \ref{Tightness of the bound}.
\end{itemize}
\end{remark}
\noindent We will see later how this will give the true optimal bound for a simple example. Now we show that two important theorems of minor embedding by Choi \cite{Choi2008} follow as corollaries of our main theorem.
\begin{corollary}[Choi's first theorem]\label{Choi thm 1}
Let $M_i$ be the constant defined in \eqref{embedding condition 2} satisfying
\begin{equation}
\label{Choi thm 1 eqn 1}
M_i \ge |h_i|+\sum_{j\in \operatorname{nbh}(i)}|J_{ij}|\,,
\end{equation}
where $\operatorname{nbh}(i)$ means the neighbourhood of vertex $i$. We have 
\begin{equation}
\label{Choi thm 1 eqn 2}
s^*_{i_p}s^*_{i_q}=1\,, \qquad \text{for all $i_p i_q\in E(\iota(G))$}\,,
\end{equation}
and 
\begin{equation}
\label{Choi thm 1 eqn 3}
\operatorname{min}\mathscr{E}_{I(G)}\left(s^*_{1(1)},\dots,s^*_{1(|\iota(1)|)},\dots,s^*_{N(|\iota(N)|)}\right)=\mathscr{E}_G(s^*_1,\dots,s^*_N)\,,
\end{equation}
where $s^*_k=s^*_{k(j)}$, for all $j\in \iota(k)$.
\end{corollary}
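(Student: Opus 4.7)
The strategy is to derive Corollary~\ref{Choi thm 1} as an immediate consequence of Theorem~\ref{main thm}, by showing that Choi's bound $|h_i|+\sum_{j\in\operatorname{nbh}(i)}|J_{ij}|$ dominates the sharp bound \eqref{main thm eqn 1} once we fix a sensible distribution of the local fields within each embedded tree. The embedding constraint \eqref{embedding condition 1.1} leaves the $h_{i(k)}$'s free, and the canonical choice is the uniform assignment $h_{i(k)}=h_i/|\iota(i)|$, which in particular makes every $h_{i(k)}$ share the sign of $h_i$.

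First, I would use the embedding condition to simplify the two arguments of the inner minimum in \eqref{main thm eqn 1}. Setting $V_i:=\iota(i)\backslash W_i$, the identity $h(W_i)-h_i=-h(V_i)$ rewrites the second argument as $|h(V_i)+J(V_i)|$. Under sign-consistency of the $h_{i(k)}$'s with $h_i$ (which uniform distribution provides), the local fields split cleanly as $|h(W_i)|+|h(V_i)|=|h_i|$, and by definition of $J_{i(k)}$, together with the fact that every external edge incident to the tree $\iota(i)$ touches exactly one physical qubit, $J(W_i)+J(V_i)=J(\iota(i))=\sum_{j\in\operatorname{nbh}(i)}|J_{ij}|$.

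Second, by the triangle inequality and non-negativity of $J$,
\begin{align*}
|h(W_i)-J(W_i)|+|h(V_i)+J(V_i)|
&\le \bigl(|h(W_i)|+J(W_i)\bigr)+\bigl(|h(V_i)|+J(V_i)\bigr)\\
&= |h_i|+\sum_{j\in\operatorname{nbh}(i)}|J_{ij}|.
\end{align*}
Using $\min(|a|,|b|)\le\tfrac{1}{2}(|a|+|b|)$ and $|\partial W_i|\ge 1$, this yields
\[
\frac{1}{|\partial W_i|}\min\bigl\{|h(W_i)-J(W_i)|,\,|h(V_i)+J(V_i)|\bigr\}\;\le\;|h_i|+\sum_{j\in\operatorname{nbh}(i)}|J_{ij}|,
\]
and the inequality survives the maximum over all non-trivial $W_i\subsetneq\iota(i)$. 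Hence Choi's hypothesis \eqref{Choi thm 1 eqn 1} implies the hypothesis \eqref{main thm eqn 1} of Theorem~\ref{main thm}, and the conclusions \eqref{Choi thm 1 eqn 2}--\eqref{Choi thm 1 eqn 3} follow at once.

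The main obstacle is justifying the sign-consistent choice of the $h_{i(k)}$'s: for arbitrary (sign-cancelling) distributions the identity $|h(W_i)|+|h(V_i)|=|h_i|$ fails, and Choi's bound can be insufficient (e.g.\ take $h_i=0$ but place $h_{i(k)}$'s of large opposite sign on different qubits of the tree, making $|h(W_i)|$ arbitrarily large while Choi's bound vanishes in the absence of external couplings). The corollary therefore implicitly presupposes, at minimum, a sign-consistent local-field assignment inside each embedded tree, of which the uniform distribution is the canonical representative; this is precisely the regime in which Remark~\ref{main thm rmk} suggests minimising over $h_{i(k)}$ yields the tightest bound.
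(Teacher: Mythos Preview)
Your reduction to Theorem~\ref{main thm} is the same as the paper's, and your argument is correct. The paper's own proof is shorter: rather than summing the two arguments of the inner minimum and invoking $\min(|a|,|b|)\le\tfrac12(|a|+|b|)$, it simply bounds the first argument alone,
\[
|h(W_i)-J(W_i)|\ \le\ |h(W_i)|+J(W_i)\ \le\ |h_i|+\sum_{j\in\operatorname{nbh}(i)}|J_{ij}|,
\]
and uses that the minimum is dominated by either term. This avoids the factor-of-two slack you introduce (though of course that slack is harmless for the conclusion). Your symmetric treatment via $h(W_i)-h_i=-h(V_i)$ is a nice touch but not needed.

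Where you are more careful than the paper is on the local-field assignment. The step $|h(W_i)|+|h(V_i)|=|h_i|$ (equivalently $|h(W_i)|\le|h_i|$) genuinely requires a sign-consistent distribution of the $h_{i(k)}$'s, and your counterexample with cancelling fields shows the bound can fail otherwise. The paper's proof uses the same inequality without comment, and the subsequent remark even asserts that the corollary ``does not assign any value to $h_{i(k)}$''; your observation pinpoints the implicit hypothesis behind that claim.
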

\begin{proof}
It suffices to show that 
\begin{equation}
\label{corallary 1 eqn 1}
|h_i|+\sum_{j\in \operatorname{nbh}(i)}|J_{ij}|
\ge
\operatorname{max}_{W_i\subset \iota(i)}\left( \frac{1}{|\partial W_i|}\operatorname{min}\Big\{|h(W_i)-J(W_i)|,|h(W_i)-h_i-J(\iota(i)\backslash W_i)| \Big\}\right)\,.
\end{equation}
Since for each $W_i\subset\iota(i)$, we have
\[
|h_i|+\sum_{j\in \operatorname{nbh}(i)}|J_{ij}|
\ge
|h(W_i)-J(W_i)|
\ge
\frac{1}{|\partial W_i|} |h(W_i)-J(W_i)|\,,
\]
the inequality \eqref{corallary 1 eqn 1} follows immediately.
\end{proof}
In order to get Choi's tighter bound for the ferromagnetic coupler strengths, one needs to introduce the following object.
\begin{equation}
\label{def C}
C(i):=\sum_{j\in\operatorname{nbh}(i)}|J_{ij}|-|h_i|\,, \qquad \text{for all $i \in V(G)$}\,,
\end{equation}
which defines whether the spin of particle $i$ is locally determinable or non-determinable. When $C(i)<0$, the spin of particle $i$ is locally determinable, as the local field $h_i$ is dominant, whereas when $C(i) \ge 0$, its spin must be determined globally. Without loss of generality, we can assume $C(i) \ge 0$. Now, we are ready to state our second corollary.
\begin{corollary}[Choi's second theorem]\label{Choi thm 2}
Let $h_{i(k)}$ satisfy
\begin{equation}
\label{Choi thm 2 eqn 1}
h_{i(k)}=\operatorname{sgn}(h_i)\left\{
\begin{aligned}
&\sum_{\tau(j,i)\in\operatorname{Onhb}(i(k))}|J_{ij}|-\frac{C(i)}{l(i)}  \,, \qquad& &\text{where $i(k)$ is one of the $l(i)$ leaves of $\iota(i)$}\,;\\
&\sum_{\tau(j,i)\in\operatorname{Onhb}(i(k))}|J_{ij}| & &\text{otherwise}\,,
\end{aligned}
\right.
\end{equation}
where $\operatorname{Onbh}(i(k))$ means the original neighbourhood of vertex $i(k)\in \iota(i)$.Then
\begin{equation}
\label{Choi thm 2 eqn 2}
M \ge \frac{l(i)-1}{l(i)}C(i) \qquad \text{for all $i \in V(G)$}
\end{equation}
yields the same result as Corollary \ref{Choi thm 1}.
\end{corollary}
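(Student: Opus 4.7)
The plan is to apply Theorem \ref{main thm} with the specific local fields $h_{i(k)}$ prescribed by \eqref{Choi thm 2 eqn 1}, and then show that the right-hand side of \eqref{main thm eqn 1} is bounded above by $\frac{l(i)-1}{l(i)}C(i)$. Without loss of generality I would take $h_i\ge 0$, so that $\operatorname{sgn}(h_i)=+1$ and the prescription reduces to $h_{i(k)}=J_{i(k)}-C(i)/l(i)$ at each leaf $i(k)$ of $\iota(i)$ and $h_{i(k)}=J_{i(k)}$ at every internal vertex. A quick verification shows that this obeys the normalisation \eqref{embedding condition 1.1}, since summing over all vertices of $\iota(i)$ gives $\sum_k h_{i(k)}=\sum_{j\in\operatorname{nbh}(i)}|J_{ij}|-C(i)=h_i$.

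The first substantial step would be to simplify the two quantities inside the min in \eqref{main thm eqn 1}. With the above prescription, the vertex-wise difference $h_{i(k)}-J_{i(k)}$ is $-C(i)/l(i)$ at each leaf and vanishes at every internal vertex, so telescoping yields
\[
|h(W_i)-J(W_i)| \;=\; \frac{m}{l(i)}\,C(i)
\]
for any $\emptyset\ne W_i\subsetneq\iota(i)$ containing exactly $m$ leaves of $\iota(i)$. Since the min in \eqref{main thm eqn 1} is bounded above by this first argument, the problem reduces to establishing
\[
\max_{W_i}\;\frac{m\,C(i)}{l(i)\,|\partial W_i|}\;\le\;\frac{l(i)-1}{l(i)}\,C(i).
\]

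When $m\le l(i)-1$ this is immediate, as $|\partial W_i|\ge 1$. The hard part will be the borderline case $m=l(i)$: here $W_i$ already contains every leaf, while $W_i^c:=\iota(i)\setminus W_i$ is a non-empty set of \emph{internal} vertices (each of degree at least two in $\iota(i)$). For this case I would exploit the tree structure through the standard vertex-cut identity
\[
|\partial W_i|\;=\;\sum_{v\in W_i^c}\deg_{\iota(i)}(v)\;-\;2\bigl(|W_i^c|-c\bigr),
\]
where $c$ denotes the number of connected components of the subforest $W_i^c$; combined with $\deg_{\iota(i)}(v)\ge 2$ throughout $W_i^c$, this forces $|\partial W_i|\ge 2c\ge 2$. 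Consequently $\frac{l(i)\,C(i)}{l(i)\,|\partial W_i|}\le\frac{C(i)}{2}\le\frac{l(i)-1}{l(i)}C(i)$ whenever $l(i)\ge 2$, and the case $l(i)=1$ is vacuous since $\iota(i)$ is then a single vertex admitting no proper non-empty subset.

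Combining the two cases gives the required bound on the maximum in \eqref{main thm eqn 1}, so Theorem \ref{main thm} delivers \eqref{Choi thm 2 eqn 2} with the stated choice of $M$. The delicate point throughout is purely combinatorial, namely the lower bound $|\partial W_i|\ge 2$ when $W_i$ contains every leaf of the embedding tree; everything else is direct telescoping using the explicit shape of $h_{i(k)}$.
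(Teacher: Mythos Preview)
Your proposal is correct and follows essentially the same route as the paper's proof: both reduce to the first argument of the $\min$, compute $|h(W_i)-J(W_i)|=\dfrac{m}{l(i)}C(i)$ with $m=|L(i)\cap W_i|$, and then bound $\dfrac{m}{|\partial W_i|}\le l(i)-1$. The paper compresses this last step into the single inequality $|\partial(L(i)\cap W_i)|\le |\partial W_i|\,(l(i)-1)$, asserted as ``easily verified'', whereas you split into the cases $m\le l(i)-1$ and $m=l(i)$ and give an explicit cut--degree argument for the latter; your treatment is therefore more transparent on precisely the point the paper leaves implicit.
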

\begin{remark}[Comparison between Choi's two theorems]

\

\begin{itemize}
	\item Corollary \ref{Choi thm 1} is independent of the values of the $C(i)$'s and is certainly larger than the bound given in Corollary \ref{Choi thm 2}. However, Corollary \ref{Choi thm 1} does not assign any value to $h_{i(k)}$, whereas Corollary \ref{Choi thm 2} holds only when the $h_{i(k)}$'s satisfy equations \eqref{Choi thm 2 eqn 1}.
	\item Corollary \ref{Choi thm 2} gives the best bound when $C(i)=0$ for all $i\in V(G)$.
	\item The larger (weaker) bound given by Corollary \ref{Choi thm 1} does not require any topological information about the minor embedding, while the smaller (stronger) bound given by Corollary \ref{Choi thm 2} depends non-trivially on the topology of the minor embedding.
	\item Both proofs for Corollary \ref{Choi thm 1} and Corollary \ref{Choi thm 2} are quite different and there is no obvious derivation from Corollary \ref{Choi thm 1} to Corollary \ref{Choi thm 2}.
\end{itemize}
\end{remark}
Now we give a simple proof of Choi's second theorem as a corollary.
\begin{proof}
It suffices to show that 
\begin{equation}
\label{corallary 2 eqn 1}
\frac{l(i)-1}{l(i)}C(i)
\ge
\operatorname{max}_{W_i\subset \iota(i)}\left( \frac{1}{|\partial W_i|}\operatorname{min}\Big\{|h(W_i)-J(W_i)|,|h(W_i)-h_i-J(\iota(i)\backslash W_i)| \Big\}\right)\,,
\end{equation}
for $h_{i(k)}$ setting as in equations \eqref{Choi thm 2 eqn 1} and for all $\emptyset\ne W_i\subsetneq \iota(i)$. Now we have
\[
|h(W_i)-J(W_i)|
=
\left|\partial (L(i)\cap W_i)\right|\times \frac{C(i)}{l(i)} \,,
\]
where $L(i)$ is the set of leaves in $\iota(i)$. As $|\partial W_i|\ge 1$ for $\emptyset\ne W_i\subsetneq \iota(i)$, one can easily verify that
\[
\left|\partial (L(i)\cap W_i)\right|\le |\partial W_i|\left(|\partial L(i)|-1\right)=|\partial W_i|\left(l(i)-1\right)\,.
\] 
Therefore, we have
\[
\frac{1}{|\partial W_i|}|h(W_i)-J(W_i)|\le \frac{l(i)-1}{l(i)} C(i)\,,
\]
for all $\emptyset\ne W_i\subsetneq \iota(i)$, which completes the proof.
\end{proof}
As it remains open on the tightness of the bound in Corollary \ref{Choi thm 2}, we will give a simple example in the next subsection, which shows that even for $h_{i(k)}$'s given as in equation \eqref{Choi thm 2 eqn 1}, the bound is not tight. Furthermore, by relaxing the condition \eqref{Choi thm 2 eqn 1}, one can achieve the best bound.

\subsection{An example: existence of a tighter bound}\label{An example}
\begin{figure}[h]
	\centering
	\includegraphics[scale=0.3]{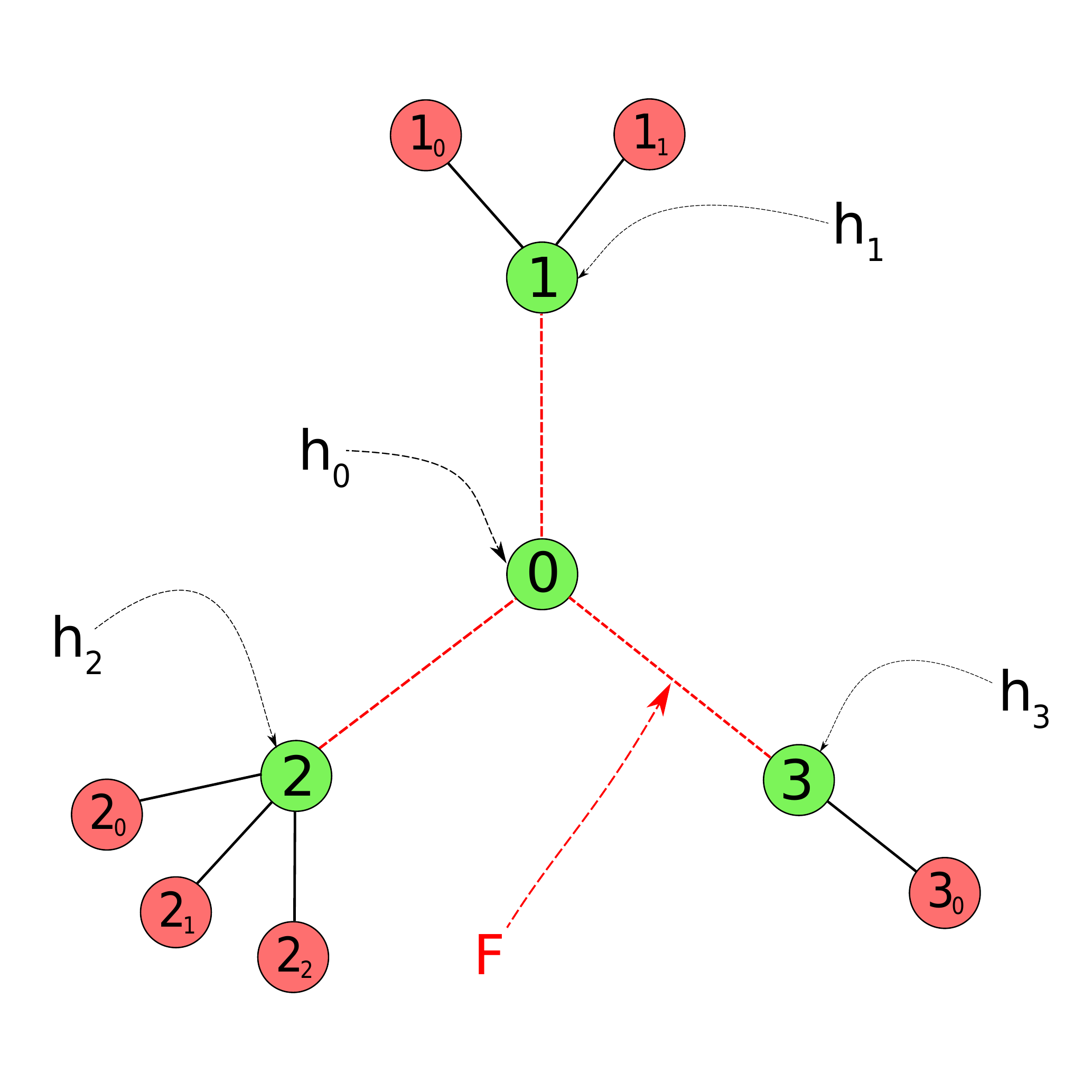}
	\caption{An example of $\iota(i)$. The green tree represents the minor embedding of $i$-th logical qubit, where the local field $h_i$ has been split into $h_1$, $h_2$, $h_3$ and $h_4$.}
	\label{minorexmaple1}
\end{figure}
In this subsection, we give an example to show the existence of a tighter bound for the ferromagnetic coupling strength compared with Corollary \ref{Choi thm 2}. Let us consider the minor embedding of a vertex $i$ as in Figure \ref{minorexmaple1}. For the sake of this example we set the couplers and local fields such that
\begin{equation}
\sum_{1(k)\in\operatorname{Onbh}(1)}|J_{1,1(k)}|=\sum_{2(k)\in\operatorname{Onbh}(2)}|J_{2,2(k)}|=\sum_{3(k)\in\operatorname{Onbh}(3)}|J_{3,3(k)}|=5h>0\,,
\end{equation}
and
\begin{equation}
h_i=3h\,.
\end{equation}
According to Corollary \ref{Choi thm 2}, for this example we have
\begin{equation}
C(i)=12h\,,\qquad l(i)=3\,,\qquad h_{i(0)}=0\,,\qquad h_{i(1)}=h_{i(2)}=h_{i(3)}=h\,. 
\end{equation}
More importantly, the bound for the ferromagnetic coupler strengths according to Corollary \ref{Choi thm 2} is given by
\begin{equation}
\label{Chois bound}
F_i<-8h\,.
\end{equation}
Our new tighter bound shows that a better bound exists. i.e.
$$
F_i<-6h\,,
$$
is sufficient for this toy model. See Appendix \ref{An example for the existence of a better bound} for details.

We will show later in Section \ref{Tightness of the bound} that the best bound for this example is $F_i<-5h$, if we allow $h_{i(k)}$ to have different values.

\subsection{Proof of the main theorem}
In this subsection, we give the full proof of our main theorem.
	
In order for sufficiently large $M_i$ to preserve the homogeneity of spins in $\iota(i)$, we need to find a sufficient condition so that the formation of each domain wall is forbidden. Now we have the following lemma.

\begin{lemma}
\label{lemma 1}
\begin{equation}
\label{lemma 1 main eqn}
M_i \ge \frac{1}{|\partial W_i|}\operatorname{min}\left\{|h(W_i)-J(W_i)|,|h(W_i)-h_i-J(\iota(i)\backslash W_i)| \right\}
\end{equation}
implies $\partial W_i$ is not a positive domain wall within the ground-state configuration of $\mathscr{E}_{I(G)}$.
\end{lemma}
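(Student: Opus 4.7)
I would argue by contradiction. Let $s^*$ be a candidate ground state of $\mathscr{E}_{I(G)}$ in which $\partial W_i$ is a positive domain wall, that is $s^*_{i(k)}=+1$ on $W_i$ and $s^*_{i(k)}=-1$ on $\iota(i)\backslash W_i$. From $s^*$ I would manufacture two competitor configurations $s^{(1)},s^{(2)}$, each obtained by destroying the wall in one of the two obvious ways: $s^{(1)}$ flips $W_i$ to $-1$ (so that all of $\iota(i)$ is $-1$), while $s^{(2)}$ flips $\iota(i)\backslash W_i$ to $+1$ (so that all of $\iota(i)$ is $+1$); both agree with $s^*$ outside $\iota(i)$. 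The target is to show that hypothesis \eqref{lemma 1 main eqn} forces $\min\{\mathscr{E}_{I(G)}(s^{(1)}),\mathscr{E}_{I(G)}(s^{(2)})\}<\mathscr{E}_{I(G)}(s^*)$, contradicting optimality.

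The next step is a direct accounting of $\Delta_r:=\mathscr{E}_{I(G)}(s^{(r)})-\mathscr{E}_{I(G)}(s^*)$, $r=1,2$, from \eqref{adiabatic Ising energy U}. Only three groups of terms can contribute: the local fields at the flipped vertices, the internal ferromagnetic couplings on edges of $\partial W_i$ (the only internal edges whose spin product is altered), and the external $J$-couplings incident to the flipped vertices. The local-field piece is $-2h(W_i)$ for $\Delta_1$ and $2h(\iota(i)\backslash W_i)=2(h_i-h(W_i))$ for $\Delta_2$, using \eqref{embedding condition 1.1}. The internal piece equals $2\sum_{pq\in\partial W_i}F^{pq}_i$ in both cases, which by \eqref{embedding condition 2} is strictly less than $-2M_i|\partial W_i|$. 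The external piece depends on the fixed-but-arbitrary neighbor spins $s^*_l$ outside $\iota(i)$; using $|s^*_l|=1$ together with $J_{i(k)}=\sum_{l\in\operatorname{Onbh}(i(k))}|J_{l,i(k)}|$ and the triangle inequality, its magnitude is at most $2J(W_i)$ for $\Delta_1$ and $2J(\iota(i)\backslash W_i)$ for $\Delta_2$. Assembling,
\[
\Delta_1<2\bigl(J(W_i)-h(W_i)\bigr)-2M_i|\partial W_i|\le 2\bigl|h(W_i)-J(W_i)\bigr|-2M_i|\partial W_i|,
\]
\[
\Delta_2<2\bigl(h_i-h(W_i)+J(\iota(i)\backslash W_i)\bigr)-2M_i|\partial W_i|\le 2\bigl|h(W_i)-h_i-J(\iota(i)\backslash W_i)\bigr|-2M_i|\partial W_i|.
\]
Hypothesis \eqref{lemma 1 main eqn} then forces at least one of $\Delta_1,\Delta_2$ to be strictly negative, completing the contradiction.

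The main technical obstacle is the external-$J$ contribution, whose sign is dictated by the unknown spin pattern on $V(U)\backslash\iota(i)$ under $s^*$. The existence of two competitor flips and the $\min$ on the right-hand side of \eqref{lemma 1 main eqn} are precisely the devices that let us sidestep this uncertainty: one flip handles the worst-case sign pattern on neighbors attached to $W_i$, the other handles the worst-case pattern on neighbors attached to $\iota(i)\backslash W_i$, and we are free to pick whichever argument yields the tighter bound for a given tree and partition. A subtler bookkeeping point is that the identity $h(\iota(i)\backslash W_i)=h_i-h(W_i)$ from \eqref{embedding condition 1.1} is what lets the two bounds be packaged symmetrically into the single statement of the lemma.
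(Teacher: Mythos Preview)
Your argument is correct and is essentially the same as the paper's: you argue by contradiction, construct the two uniform competitors obtained by flipping $W_i$ or $\iota(i)\backslash W_i$, bound the energy differences using the worst-case sign pattern of the external neighbours (which produces the $J(W_i)$ and $J(\iota(i)\backslash W_i)$ terms), and invoke $F_i^{pq}<-M_i$ together with the $\min$ in \eqref{lemma 1 main eqn} to force one of the differences strictly negative. The only cosmetic difference is that the paper computes $\mathscr{E}_{I(G)}(s^*)-\mathscr{E}_{I(G)}(s^{(r)})$ and shows one of these is strictly positive, whereas you compute the opposite sign and bound $J(W_i)-h(W_i)\le|h(W_i)-J(W_i)|$ explicitly; the content is identical.
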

\begin{proof}
Let $W_i(\pm)$ denote the spin configuration for all spins being $\pm 1$ in $W_i$ and $\overline{W}_i(\cdot)$ be the spin configuration for the complement of $W_i$ with respect to $\iota(i)$. Now suppose  $\partial W_i$ is a positive domain wall within the ground-state configuration of $\mathscr{E}_{I(G)}$. Then we have
\begin{equation}
\label{lemma 1 eqn 1}
\mathscr{E}_{I(G)}\left(W_i(+),\overline{W}_i(-),\dots\right)\le \mathscr{E}_{I(G)}\left(W_i(-),\overline{W}_i(-),\dots\right)\,,
\end{equation}
and
\begin{equation}
\label{lemma 1 eqn 2}
\mathscr{E}_{I(G)}\left(W_i(+),\overline{W}_i(-),\dots\right)\le \mathscr{E}_{I(G)}\left(W_i(+),\overline{W}_i(+),\dots\right)\,.
\end{equation}
However, according to equation \eqref{adiabatic Ising energy U}, we have
\begin{multline}
\mathscr{E}_{I(G)}\left(W_i(+),\overline{W}_i(-),\dots\right)
-
\mathscr{E}_{I(G)}\left(W_i(-),\overline{W}_i(-),\dots\right)
\\
=2\left(
\sum_{i(k)\in W_i} h_{i(k)}+\sum_{i(k)\in W_i}\sum_{l\in \operatorname{Onbh}(i(k))}J_{i(k)l}\, s_{\tau(l,i(k))}-\sum_{i_p i_q \in \partial W_i} F_i^{pq}
\right)
\\
\ge
2\left(
\sum_{i(k)\in W_i} h_{i(k)}-\sum_{i(k)\in W_i}\sum_{l\in \operatorname{Onbh}(i(k))} |J_{i(k)l}|-\sum_{i_p i_q \in \partial W_i} F_i^{pq}
\right)
\\
=
2\left(
h(W_i)-J(W_i)-\sum_{i_p i_q \in \partial W_i} F_i^{pq}
\right)
>
2\left(
h(W_i)-J(W_i)+|\partial W_i|\times M_i
\right)
\end{multline}
and
\begin{multline}
\mathscr{E}_{I(G)}\left(W_i(+),\overline{W}_i(-),\dots\right)
-
\mathscr{E}_{I(G)}\left(W_i(+),\overline{W}_i(+),\dots\right)
\\
=2\left(
-\sum_{j(k)\in \overline{W}_i} h_{j(k)}-\sum_{j(k)\in \overline{W}_i}\sum_{l\in \operatorname{Onbh}(j(k))}J_{j(k)l}\, s_{\tau(l,j(k))}-\sum_{i_p i_q \in \partial W_i} F_i^{pq}
\right)
\\
=2\left(
-\left(h_i-\sum_{i(k)\in W_i} h_{i(k)}\right)-\sum_{j(k)\in \overline{W}_i}\sum_{l\in \operatorname{Onbh}(j(k))}J_{j(k)l}\, s_{\tau(l,j(k))}-\sum_{i_p i_q \in \partial W_i} F_i^{pq}
\right)
\\
\ge
2\left(
h(W_i)-h_i-J(\overline{W}_i)-\sum_{i_p i_q \in \partial W_i} F_i^{pq}
\right)
>
2\left(
h(W_i)-h_i-J(\overline{W}_i)+|\partial W_i|\times M_i
\right)\,.
\end{multline}
Since our assumption also has
$$M_i \ge \frac{1}{|\partial W_i|}\operatorname{min}\left\{|h(W_i)-J(W_i)|,|h(W_i)-h_i-J(\iota(i)\backslash W_i)| \right\}\,,$$
we then have
\[
\mathscr{E}_{I(G)}\left(W_i(+),\overline{W}_i(-),\dots\right)
-
\mathscr{E}_{I(G)}\left(W_i(-),\overline{W}_i(-),\dots\right)
>0\,,
\]
or
\[
\mathscr{E}_{I(G)}\left(W_i(+),\overline{W}_i(-),\dots\right)
-
\mathscr{E}_{I(G)}\left(W_i(+),\overline{W}_i(+),\dots\right)
>0\,.
\]
This contradicts inequalities \eqref{lemma 1 eqn 1} and \eqref{lemma 1 eqn 2}. Hence $\left(W_i(+),\overline{W}_i(-),\dots\right)$ is not a positive domain wall within the ground-state configuration of $\mathscr{E}_{I(G)}$.
\end{proof}
Now we are ready to prove the main theorem.
\begin{proof}[Proof of the main theorem]
To prove
\begin{equation*}
s^*_{i_p}s^*_{i_q}=1\,, \qquad \text{for all $i_p i_q\in E(\iota(G))$}
\end{equation*}
and
$
\left(s^*_{1(1)},\dots,s^*_{1(|\iota(1)|)}\right)
$
is a ground-state configuration for Hamiltonian \eqref{adiabatic Ising U}, we can equivalently prove that no positive domain wall is present in the ground-state configuration. Note that the existence of a positive domain wall is equivalent to the existence of a domain wall.

Now, by Lemma \ref{lemma 1}, if $\emptyset\ne W_i \subsetneq \iota(i)$ and
\begin{equation*}
M_i \ge \frac{1}{|\partial W_i|}\operatorname{min}\left\{|h(W_i)-J(W_i)|,|h(W_i)-h_i-J(\iota(i)\backslash W_i)| \right\}
\end{equation*}	
we have that $W_i$ cannot have a positive domain wall $\partial W_i$ in the ground-state configuration. Therefore,
\begin{equation*}
M_i \ge \underset{\emptyset\ne W_i \subsetneq \iota(i)}{\operatorname{max}}\frac{1}{|\partial W_i|}\operatorname{min}\left\{|h(W_i)-J(W_i)|,|h(W_i)-h_i-J(\iota(i)\backslash W_i)| \right\}
\end{equation*}	
implies that no positive domain wall can be present in the ground-state configuration. Hence the ground-state configuration has no domain wall in $\iota(i)$.
\end{proof}

\section{Tightness of the bound}
\label{Tightness of the bound}
Now we want to show that, if the condition
\begin{equation}
\label{best constant condition}
h(W_i)\le h_i+J(\overline{W}_i)\qquad\text{or}\qquad h(\overline{W}_i)\le h_i-J(W_i)
\end{equation}
is satisfied, then
\begin{equation}
\label{best constant}
M(W_i;h,J):=\frac{1}{|\partial W_i|}\operatorname{min}\left\{|h(W_i)-J(W_i)|,|h(W_i)-h_i-J(\overline{W}_i)| \right\}
\end{equation}
is the best bound for $\emptyset\ne W_i\subsetneq \iota(i)$. That is for any $\epsilon>0$ and $F_i^{pq}=-M(W_i;h,J)+\epsilon$, we have that the ground-state of $\mathscr{E}_{I(G)}$ has a domain wall in $\iota(i)$ in the worst scenario. Here, the worst scenario is understood in the following theorem.
\begin{theorem}
\label{thm 2}
Suppose condition \eqref{best constant condition} is satisfied and let $M(W_i;h,J)$ be defined in equation \eqref{best constant}. For any $\epsilon>0$, if $F_i^{pq}=-M(W_i,h,J)+\frac{\epsilon}{|\partial W_i|}$, then
$\mathscr{E}_{I(G)}\left(W_i(+),\overline{W}_i(+),\dots\right)$ and $\mathscr{E}_{I(G)}\left(W_i(-),\overline{W}_i(-),\dots\right)$ are not the ground-state configurations for some values of $s_{\tau(i,j)}$ with $j\in \operatorname{nbh}(i)$.
\end{theorem}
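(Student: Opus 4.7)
My strategy is to prove tightness by inverting the reasoning of Lemma~\ref{lemma 1}. The crucial estimate in that proof replaced $\sum_l J_{i(k)l}s_{\tau(l,i(k))}$ by its worst-case lower bound $-\sum_l|J_{i(k)l}|$; I would now choose the external spins precisely so that this bound is attained with equality, converting the two inequalities used in the proof of Lemma~\ref{lemma 1} into exact identities. With those identities in hand, the $\epsilon$-shortfall between $|F_i^{pq}|$ and $M(W_i;h,J)$ translates directly into a strictly negative energy gap between a domain-wall configuration at $\partial W_i$ and each of the two all-aligned configurations, which is exactly what the statement of Theorem~\ref{thm 2} demands.

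Concretely, I would set $s_{\tau(l,i(k))}=-\operatorname{sgn}(J_{i(k)l})$ for every $i(k)\in W_i$ and $s_{\tau(l,i(k))}=+\operatorname{sgn}(J_{i(k)l})$ for every $i(k)\in\overline{W}_i$, making both worst-case sums in Lemma~\ref{lemma 1} saturate. Then, using the expression \eqref{adiabatic Ising energy U}, the constraint \eqref{embedding condition 1.1}, and the identity $\sum_{i_pi_q\in\partial W_i}F_i^{pq}=-|\partial W_i|M(W_i;h,J)+\epsilon$ coming from the assumed form of $F_i^{pq}$, the same algebra as in the proof of Lemma~\ref{lemma 1} (now with equalities throughout) produces
\begin{align*}
\mathscr{E}_{I(G)}(W_i(+),\overline{W}_i(-),\dots)-\mathscr{E}_{I(G)}(W_i(-),\overline{W}_i(-),\dots)&=2\bigl[\,h(W_i)-J(W_i)+|\partial W_i|M(W_i;h,J)-\epsilon\,\bigr],\\
\mathscr{E}_{I(G)}(W_i(+),\overline{W}_i(-),\dots)-\mathscr{E}_{I(G)}(W_i(+),\overline{W}_i(+),\dots)&=2\bigl[\,h(W_i)-h_i-J(\overline{W}_i)+|\partial W_i|M(W_i;h,J)-\epsilon\,\bigr].
\end{align*}

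The last step is to invoke the hypothesis \eqref{best constant condition} together with the definition \eqref{best constant} of $M(W_i;h,J)$. Under the first alternative $h(W_i)\le h_i+J(\overline{W}_i)$, the quantity $h(W_i)-h_i-J(\overline{W}_i)$ is non-positive and equals in absolute value one of the two arguments of the $\min$ defining $M(W_i;h,J)$; since $|\partial W_i|M(W_i;h,J)$ is bounded above by that same absolute value, the second bracket is $\le -\epsilon<0$, and the first bracket is controlled in the same fashion whenever the minimum in $M$ is attained by $|h(W_i)-J(W_i)|$. When instead the second alternative $h(\overline{W}_i)\le h_i-J(W_i)$ is in force, I would run the mirror argument built on the candidate $(W_i(-),\overline{W}_i(+))$ with the opposite assignment of external spins, $s_{\tau(l,i(k))}=+\operatorname{sgn}(J_{i(k)l})$ on $W_i$ and $s_{\tau(l,i(k))}=-\operatorname{sgn}(J_{i(k)l})$ on $\overline{W}_i$. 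The hard part of the proof is the clean bookkeeping of this two-fold case split: \eqref{best constant condition} is calibrated exactly so that, in every sign regime of $h(W_i)-J(W_i)$ and $h(W_i)-h_i-J(\overline{W}_i)$, at least one of the two mirror-candidate/external-spin pairs yields strict negativity in both brackets simultaneously, which is precisely what is needed for neither $\mathscr{E}_{I(G)}(W_i(+),\overline{W}_i(+),\dots)$ nor $\mathscr{E}_{I(G)}(W_i(-),\overline{W}_i(-),\dots)$ to be a ground-state configuration.
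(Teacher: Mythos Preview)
Your approach is essentially the paper's: saturate the worst-case inequalities of Lemma~\ref{lemma 1} by an adversarial choice of the neighbouring spins, then case-split on the signs of $A:=h(W_i)-J(W_i)$ and $B:=h(W_i)-h_i-J(\overline{W}_i)$ to exhibit a domain-wall configuration strictly below both aligned states. Your ``mirror'' candidate $(W_i(-),\overline{W}_i(+))$ plays exactly the role the paper's Case~1 plays; the paper happens to choose $s_{\tau(l,i(k))}=-\operatorname{sgn}(J_{i(k)l})$ on \emph{all} of $\iota(i)$ there rather than your split assignment, but both choices work.

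Two points need tightening. First, your sentence ``the first bracket is controlled in the same fashion whenever the minimum in $M$ is attained by $|h(W_i)-J(W_i)|$'' is not right: the first bracket equals $A+|\partial W_i|M-\epsilon\le A+|A|-\epsilon$, and this is $\le -\epsilon$ precisely when $A\le 0$, independently of which term realises the minimum. The correct organising principle (which you state correctly in your final sentence) is the \emph{sign} of $A$, not which argument of the $\min$ is active; indeed the paper's case split is exactly ``$A\ge 0$'' versus ``$A<0$'', with condition~\eqref{best constant condition} used only to rule out the regime $A<0,\ B>0$.

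Second, in the regime $A\ge 0,\ B\le 0$ your mirror's comparison with $(-,-)$ reduces to $h_i-J(W_i)-J(\overline{W}_i)-\epsilon$, and concluding this is negative requires the standing assumption $|h_i|\le J(\iota(i))$ (equivalently $C(i)\ge 0$) that the paper invokes explicitly in its Case~1. You should flag that hypothesis; without it the bound can fail.
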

Before giving the proof of Theorem \ref{thm 2}, we give some remarks and corollaries.
\begin{corollary}
	If
	\begin{equation}
	\label{best constant condition 1}
	h(\overline{W}_i)\le h_i+J(W_i)\qquad\text{or}\qquad h(W_i)\le h_i-J(\overline{W}_i)
	\end{equation}
	is satisfied, then $M(\overline{W}_i;h,J)$ is the tightest bound.
\end{corollary}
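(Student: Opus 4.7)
The plan is to derive this corollary as a direct symmetry consequence of Theorem \ref{thm 2}, obtained by swapping the roles of $W_i$ and $\overline{W}_i$. Observe that substituting $\overline{W}_i$ in place of $W_i$ in condition \eqref{best constant condition} produces, since $\overline{\overline{W_i}} = W_i$, precisely the two inequalities
$h(\overline{W}_i) \le h_i + J(W_i)$ or $h(W_i) \le h_i - J(\overline{W}_i)$,
which is exactly condition \eqref{best constant condition 1}. Thus the hypothesis of the corollary is nothing other than the hypothesis of Theorem \ref{thm 2} applied to the complementary subset.

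First I would verify that all the ingredients of $M(W_i;h,J)$ transform correctly under the complement map. The boundary operator is complement-invariant: the cut edges between $W_i$ and $\iota(i)\setminus W_i$ coincide with those between $\overline{W}_i$ and $\iota(i)\setminus \overline{W}_i$, so $|\partial W_i| = |\partial \overline{W}_i|$. The quantity $M(W_i;h,J)$ defined in \eqref{best constant} therefore transforms into
$$M(\overline{W}_i;h,J)=\frac{1}{|\partial \overline{W}_i|}\operatorname{min}\bigl\{|h(\overline{W}_i)-J(\overline{W}_i)|,\;|h(\overline{W}_i)-h_i-J(W_i)| \bigr\}\,,$$
upon the substitution $W_i \mapsto \overline{W}_i$, which is the bound claimed in the corollary.

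Next I would apply Theorem \ref{thm 2} to the set $\overline{W}_i$: since its hypothesis is exactly condition \eqref{best constant condition 1}, the theorem implies that for every $\epsilon > 0$, the choice $F_i^{pq} = -M(\overline{W}_i;h,J) + \epsilon/|\partial \overline{W}_i|$ prevents both of the fully-aligned configurations $(\overline{W}_i(+), W_i(+), \dots)$ and $(\overline{W}_i(-), W_i(-), \dots)$ from being ground-states of $\mathscr{E}_{I(G)}$ for suitably chosen external neighbour spins. But these configurations are identical to $(W_i(+), \overline{W}_i(+), \dots)$ and $(W_i(-), \overline{W}_i(-), \dots)$ respectively, so a domain wall along $\partial \overline{W}_i = \partial W_i$ is forced in the worst-case scenario. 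This exhibits $M(\overline{W}_i;h,J)$ as the tightest admissible bound.

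The only subtlety — and the step I would spell out most carefully — is that Lemma \ref{lemma 1} distinguishes positive from negative domain walls, and a positive domain wall relative to $W_i$ is a negative domain wall relative to $\overline{W}_i$. However, Theorem \ref{thm 2} rules out both of the uniformly aligned configurations simultaneously, so this sign distinction is absorbed when the conclusion is restated, and the corollary follows with no additional work beyond the bookkeeping above.
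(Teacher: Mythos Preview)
Your proposal is correct and matches the paper's approach: the paper states this corollary without proof, treating it as the immediate consequence of Theorem \ref{thm 2} under the substitution $W_i \mapsto \overline{W}_i$, which is precisely the symmetry argument you have written out in detail.
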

\begin{remark}
	\label{rmk of thm 2}
	\
	
	\begin{itemize}
	\item If condition \eqref{best constant condition} is satisfied for all non-empty $W_i\subsetneq \iota(i)$, then the right hand side of expression \eqref{main thm eqn 1} is the best constant. 
	\item If $h_i$ and $h(W_i)$ are both positive, then $M(W_i;h,J)$ is the best constant. Similarly, if $h_i$ and $h(\overline{W}_i)$ are both negative, then $M(\overline{W}_i;h,J)$ is the best constant. This can be checked easily via validity of condition \eqref{best constant condition} and \eqref{best constant condition 1} respectively.
%	In particular, if all $h_i$'s are positive, then Theorem \ref{main thm} gives the best constant.
	\end{itemize}
\end{remark}
Now we give an easy proof for the best constant for example \ref{An example}. \textit{The best bound for the example given in Subsection \ref{An example} is $5h$.} Recall in Remark \ref{main thm rmk} that we need to relax the assignment of $h$-fields. Moreover, in this example, we have only one non-trivial embedding (the green vertices in Figure \ref{minorexmaple2}) and $h_i=3h>0$. By Remark \ref{rmk of thm 2}, the best bound is given by $M_i=5h$, if we allow a more general distribution of $h_{i(k)}$. See Appendix \ref{Best bound on the example} for details.

Now we give the proof of Theorem \ref{thm 2}.
\begin{proof}[Proof of Theorem \ref{thm 2}]
%We first show that $\left(W_i(-),\overline{W}_i(-),\dots\right)$ is not a ground-state configuration. 
As in the proof of the previous lemma, one has
\begin{multline*}
\mathscr{E}_{I(G)}\left(W_i(+),\overline{W}_i(-),\dots\right)
-
\mathscr{E}_{I(G)}\left(W_i(-),\overline{W}_i(-),\dots\right)
\\
=2\left(
\sum_{i(k)\in W_i} h_{i(k)}+\sum_{i(k)\in W_i}\sum_{l\in \operatorname{Onbh}(i(k))}J_{i(k)l}\, s_{\tau(l,i(k))}-\sum_{i_p i_q \in \partial W_i} F_i^{pq}
\right)
\,.
\end{multline*}
For some $s_{\tau(l,i(k))}$ with $i(k)\in V(W_i)$, we have
\begin{multline}
\label{lemma 2 eqn 1}
\mathscr{E}_{I(G)}\left(W_i(+),\overline{W}_i(-),\dots\right)
-
\mathscr{E}_{I(G)}\left(W_i(-),\overline{W}_i(-),\dots\right)
\\
=
2\left(
h(W_i)-J(W_i)-\sum_{i_p i_q \in \partial W_i} F_i^{pq}
\right)
=
2\left(
h(W_i)-J(W_i)+|\partial W_i|\times M(W_i,h,J)-\epsilon
\right)
\\
\le
2\left(
h(W_i)-J(W_i)+|h(W_i)-J(W_i)|-\epsilon
\right)
\,.
\end{multline}
\noindent\textbf{Case 1:} $h(W_i)-J(W_i)\ge 0\,.$
%This implies $h(W_i)>J(W_i)\ge 0\,.$ Since 
%$$h(W_i)+h(\overline{W}_i)\le|h(W_i)+h(\overline{W}_i)|=|h_i|\le J(W_i)+J(\overline{W}_i)\,,$$
%we have
%\[
%h(\overline{W}_i)\le J(\overline{W}_i)\,.
%\]
%This implies
Let us consider the following difference
\begin{multline*}
\mathscr{E}_{I(G)}\left(\overline{W}_i(+),W_i(-),\dots\right)
-
\mathscr{E}_{I(G)}\left(W_i(-),\overline{W}_i(-),\dots\right)
\\
=2\left(
\sum_{i(k)\in \overline{W}_i} h_{i(k)}+\sum_{j(k)\in \overline{W}_i}\sum_{l\in \operatorname{Onbh}(j(k))}J_{j(k)l}\, s_{\tau(l,j(k))}-\sum_{i_p i_q \in \partial W_i} F_i^{pq}
\right)
\,.
\end{multline*}
For some $s_{\tau(l,j(k))}$ with $j(k)\in V(\overline{W}_i)$, we have
\begin{multline}
\mathscr{E}_{I(G)}\left(\overline{W}_i(+),W_i(-),\dots\right)
-
\mathscr{E}_{I(G)}\left(W_i(-),\overline{W}_i(-),\dots\right)
\\
=
2\left(
h(\overline{W}_i)-J(\overline{W}_i)-\sum_{i_p i_q \in \partial W_i} F_i^{pq}
\right)
=
2\left(
h(\overline{W}_i)-J(\overline{W}_i)+|\partial W_i|\times M(W_i,h,J)-\epsilon
\right)
\\
\le
2\left(
h(\overline{W}_i)-J(\overline{W}_i)+|h(W_i)-J(W_i)|-\epsilon
\right)
\\
=
2\left(
h_i-J(\overline{W}_i)-J(W_i)-\epsilon
\right)
\le
-2\epsilon
<0\,.
\end{multline}
Note that we used the fact that $h_i\le |h_i| \le J(\overline{W}_i)+J(W_i)$ in the last step. Therefore, $\left(W_i(-),\overline{W}_i(-),\dots\right)$ is not a ground-state configuration. Moreover, one can show that
\begin{multline}
\mathscr{E}_{I(G)}\left(W_i(-),\overline{W}_i(+),\dots\right)
-
\mathscr{E}_{I(G)}\left(W_i(+),\overline{W}_i(+),\dots\right)
\\
=
2\left(
-h(W_i)-\sum_{i(k)\in W_i}\sum_{l\in \operatorname{Onbh}(i(k))}J_{i(k)l}\, s_{\tau(l,i(k))}
-\sum_{i_p i_q \in \partial W_i} F_i^{pq}
\right)
\\
=
2\left(
-h(W_i)-\sum_{i(k)\in W_i}\sum_{l\in \operatorname{Onbh}(i(k))}J_{i(k)l}\, s_{\tau(l,i(k))}+|\partial W_i|\times M(W_i,h,J)-\epsilon
\right)
\\
\le
2\left(
-h(W_i)+J(W_i)+|h(W_i)-J(W_i)|-\epsilon
\right)
=-2\epsilon
<0
\end{multline}
Hence $\left(W_i(+),\overline{W}_i(+),\dots\right)$ is also not a ground-state configuration.

\

\noindent\textbf{Case 2:} $h(W_i)-J(W_i)< 0\,.$
We can easily see from equation \eqref{lemma 2 eqn 1} that $\left(W_i(-),\overline{W}_i(-),\dots\right)$ is not a ground-state configuration.

\noindent Now we show that $\left(W_i(+),\overline{W}_i(+),\dots\right)$ is also not a ground-state configuration. The proof is similar to the previous case, but one needs to take care of the extra asymmetry caused by $h_i$. Let us start with the following expression
\begin{multline*}
\mathscr{E}_{I(G)}\left(W_i(+),\overline{W}_i(-),\dots\right)
-
\mathscr{E}_{I(G)}\left(W_i(+),\overline{W}_i(+),\dots\right)
\\
=2\left(
-\sum_{j(k)\in \overline{W}_i} h_{j(k)}-\sum_{j(k)\in \overline{W}_i}\sum_{l\in \operatorname{Onbh}(j(k))}J_{j(k)l}\, s_{\tau(l,j(k))}-\sum_{i_p i_q \in \partial W_i} F_i^{pq}
\right)
\,.
\end{multline*}
For some $s_{\tau(l,j(k))}$ with $j(k)\in V(\overline{W}_i)$, we have
\begin{multline}
\label{lemma 2 eqn 2}
\mathscr{E}_{I(G)}\left(W_i(+),\overline{W}_i(-),\dots\right)
-
\mathscr{E}_{I(G)}\left(W_i(+),\overline{W}_i(+),\dots\right)
\\
=
2\left(
-h(\overline{W}_i)-J(\overline{W}_i)-\sum_{i_p i_q \in \partial W_i} F_i^{pq}
\right)
=
2\left(
h(W_i)-h_i-J(\overline{W}_i)+|\partial W_i|\times M(W_i,h,J)-\epsilon
\right)
\\
\le
2\left(
h(W_i)-h_i-J(\overline{W}_i)+|h(W_i)-h_i-J(\overline{W}_i)|-\epsilon
\right)
\,.
\end{multline}
\textbf{Case 2.1:} If $h(W_i)-h_i-J(\overline{W}_i)\le 0\,,$
we can see from equation \eqref{lemma 2 eqn 2} that $\left(W_i(+),\overline{W}_i(+),\dots\right)$ is not a ground-state configuration.

\noindent\textbf{Case 2.2:} If $h(W_i)-h_i-J(\overline{W}_i)>0\,,$ then by condition \eqref{best constant condition}, one must have
\[
h(\overline{W}_i)\le h_i-J(W_i)\,,
\]
which is equivalent to
\[
h(W_i)-J(W_i)\ge 0\,.
\]
Therefore, following the same as Case 1, we complete the proof.

%\begin{multline*}
%\mathscr{E}_{I(G)}\left(\overline{W}_i(+),W_i(-),\dots\right)
%-
%\mathscr{E}_{I(G)}\left(W_i(+),\overline{W}_i(+),\dots\right)
%\\
%=2\left(
%-\sum_{j(k)\in W_i} h_{j(k)}-\sum_{j(k)\in W_i}\sum_{l\in \operatorname{Onbh}(j(k))}J_{j(k)l}\, s_{\tau(l,j(k))}-\sum_{i_p i_q \in \partial W_i} F_i^{pq}
%\right)
%\,.
%\end{multline*}
%For some $s_{\tau(l,j(k))}$ with $j(k)\in V(\overline{W}_i)$, we have
%\begin{multline}
%\mathscr{E}_{I(G)}\left(W_i(+),\overline{W}_i(-),\dots\right)
%-
%\mathscr{E}_{I(G)}\left(W_i(+),\overline{W}_i(+),\dots\right)
%\\
%=
%2\left(
%-h(\overline{W}_i)-J(\overline{W}_i)-\sum_{i_p i_q \in \partial W_i} F_i^{pq}
%\right)
%=
%2\left(
%h(W_i)-h_i-J(\overline{W}_i)+|\partial W_i|\times M(W_i,h,J)-\epsilon
%\right)
%\\
%\le
%2\left(
%h(\overline{W}_i)-h_i-J(W_i)+|h(W_i)-h_i-J(\overline{W}_i)|-\epsilon
%\right)
%\\
%=
%2\left(
%h_i-J(\overline{W}_i)-J(W_i)-\epsilon
%\right)
%\le
%-2\epsilon
%<0\,.
%\end{multline}
%Note that we used the fact that $h_i\le |h_i| \le J(\overline{W}_i)+J(W_i)$ in the last step. Therefore, $\left(W_i(-),\overline{W}_i(-),\dots\right)$ is not a ground-state configuration in both cases.

\end{proof}

\subsection{Admissible minor embeddings}
\label{Admissible minor embeddings}
Now we show that conditions \eqref{best constant condition} and \eqref{best constant condition 1} should be satisfied for any reasonable minor embedding. We call a minor embedding, say $(I,h,J,F)$, admissible if the following condition is satisfied.
\begin{itemize}
%	\item $(I,h,J)$ does not pre-assign (partial) spin configurations within $\iota(i)$ for any $i\in G$.
	\item $(I,h,J,F)$ does not exclude any possible spin configuration for any $i\in G$ in any embedded Ising problem.
\end{itemize}
Here $F$ denotes the absolute value of the chain strength. Note that admissible minor embeddings are more suitable for practical purposes, since for general NP-hard problems we do not expect any pre-assignment for any logical qubit in $G$. It can be shown that the condition for admissible minor embeddings implies conditions \eqref{best constant condition} and \eqref{best constant condition 1}.
\begin{proof}[Verification]
\begin{equation*}
\neg\text{condition \eqref{best constant condition}} \vee \neg\text{condition \eqref{best constant condition 1}}
\end{equation*}
is equivalent to
\begin{multline*}
\left[\left(-J(\overline{W}_i)>h(\overline{W}_i)\right) \wedge \left(J(W_i)>h(W_i)\right)\right] \vee \left[\left(-J(W_i)>h(W_i)\right) \wedge \left(J(\overline{W}_i)>h(\overline{W}_i)\right)\right]\\
\implies h(W_i)<J(W_i) \quad\text{for some $W_i\subset \iota(i)$.}
\end{multline*}
By the \textbf{Case 2} analysis in the proof of theorem \ref{thm 2}, we see that $\left(W_i(+),\overline{W}_i(+),\dots\right)$ is the only possible ground-state configuration for some problems, if $F_i^{pq}>-M(W_i,h,J)$. This is a pre-assignment for the $i$-th logical qubit. Hence it is not an admissible minor embedding.
\end{proof}
%The condition $C(W_i)\ge 0$ is directly related to the $h_{i(k)}$ defined in equation \eqref{embedding condition 1.1}. This extra condition has been ignored by many others. We shall call those $h_{i(k)}$-fields admissible if they give an admissible minor embedding. 
Now, an immediate consequence of Theorem \ref{thm 2} gives
\begin{theorem}
	\label{corollary of thm 2}
	Let $(I,h,J,F)$ be an admissible minor embedding and $M(W_i;h,J)$ be defined in equation \eqref{best constant}. Then $M(W_i;h,J)$ is the best constant for all $W_i\subset \iota(i)$. Hence $$\underset{W_i}{\operatorname{max}}\left( \frac{1}{|\partial W_i|}\operatorname{min}\Big\{|h(W_i)-J(W_i)|,|h(W_i)-h_i-J(\iota(i)\backslash W_i)| \Big\}\right)$$ is the tightest bound for admissible minor embeddings.
\end{theorem}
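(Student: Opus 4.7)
The plan is to obtain Theorem \ref{corollary of thm 2} as an immediate consequence of Theorem \ref{thm 2}, Theorem \ref{main thm}, and the admissibility verification immediately preceding the statement. The sufficiency direction — that $M_i \ge \max_{W_i} M(W_i; h, J)$ rules out every domain wall — is already supplied by Theorem \ref{main thm}, so what remains is to show that this maximum cannot be lowered.

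For the necessity direction, I would fix an arbitrary non-empty proper subset $W_i \subsetneq \iota(i)$ and use the assumption of admissibility. By the Verification paragraph just above the statement, admissibility forbids the simultaneous failure of conditions \eqref{best constant condition} and \eqref{best constant condition 1} (otherwise $h(W_i) < J(W_i)$ for some subset forces a pre-assignment on the $i$-th logical qubit, contradicting admissibility). Therefore at least one of \eqref{best constant condition}, \eqref{best constant condition 1} holds for this particular $W_i$, which is exactly the hypothesis of Theorem \ref{thm 2} (in the former case) or of its Corollary (in the latter). In either case, $M(W_i; h, J)$ is the best constant for excluding the domain wall $\partial W_i$ from the ground state: shrinking it by any $\epsilon > 0$ admits an external-neighbourhood spin assignment under which neither fully-aligned configuration $(W_i(+),\overline{W}_i(+),\dots)$ nor $(W_i(-),\overline{W}_i(-),\dots)$ is a ground state, so a domain wall along $\partial W_i$ must appear.

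Finally, I would combine the per-$W_i$ necessity with the Theorem \ref{main thm} sufficiency. Since every $W_i$ must be handled and for each $W_i$ the bound $M(W_i; h, J)$ is individually tight, the overall tightest bound is obtained by taking the maximum over all non-empty proper $W_i \subsetneq \iota(i)$, which is exactly the expression appearing on the right-hand side of \eqref{main thm eqn 1}. The main (and essentially only) subtlety I anticipate is the logical bookkeeping of the dichotomy supplied by the Verification: one must check that the Corollary after Theorem \ref{thm 2} produces the same formula when \eqref{best constant condition 1} rather than \eqref{best constant condition} is the operative hypothesis, so that both branches contribute uniformly to the maximum. Once that symmetry is in hand, the theorem follows without any further computation.
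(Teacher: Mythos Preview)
Your approach is essentially the paper's own: the theorem is presented there as ``an immediate consequence of Theorem \ref{thm 2}'' together with the Verification and Theorem \ref{main thm}, and you have correctly identified those three ingredients.

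There is, however, a small misreading of the Verification that makes your argument more convoluted than necessary. The Verification does not merely rule out the \emph{simultaneous} failure of \eqref{best constant condition} and \eqref{best constant condition 1}; it shows that admissibility rules out $\neg\eqref{best constant condition}\vee\neg\eqref{best constant condition 1}$, i.e.\ that \emph{both} \eqref{best constant condition} and \eqref{best constant condition 1} hold for every $W_i$. Once you read it this way, the dichotomy you set up disappears: condition \eqref{best constant condition} is always available, so Theorem \ref{thm 2} applies directly to every $W_i$ and yields tightness of $M(W_i;h,J)$ without ever invoking the Corollary branch. The ``main subtlety'' you flag --- checking that the Corollary produces the same formula on the $\overline{W}_i$ side --- is therefore not actually needed. (It is also not obviously true that $M(W_i;h,J)=M(\overline{W}_i;h,J)$ in general, so relying on that symmetry would be a genuine gap if the weaker disjunctive reading were all you had.)
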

\begin{remark}[Importance of the distribution of $h_{i(k)}$]\label{rmk on broken chains}
	An admissible minor embedding $(I,h,J,F)$ can be viewed as a minimum requirement for perfect (non-broken) chains in the worst scenario. The minimum strength of $F_i^{pq}$ is determined by $h_{i(k)}$ and $J$ via the expression of $M(W_i;h,J)$. However, if we fix the values of the $F_i^{pq}$'s, we cannot choose the distribution of $h_{i(k)}$ arbitrarily, even with condition \eqref{embedding condition 1.1} ($\sum h_{i(k)}= h_i$) satisfied. This will not cause any trouble if the $F_i^{pq}$'s are sufficiently large. However, when the $F_i^{pq}$'s are small compared with $h_{i(k)}$, one needs to be more careful. More precisely, if we define $C(W_i):=J(W_i)+|\partial (W_i)|\times F-|h(W_i)|$, then $C(W_i)$ has to be greater or equal to zero for admissible minor embeddings. In other words, we must have $|h(W_i)|\le J(W_i)+|\partial (W_i)|\times F$, which is an upper bound for $h_{i(k)}$. This condition can be easily violated when $h_{i(k)}$ is concentrated in a single physical qubit and $F$ is comparably small. This is the situation when we apply the single distribution method as defined in \cite{Pudenz}. Therefore, there are likely to be some non-admissible minor embeddings in the single distribution method.

%	It is worthy to noting that Choi's strategy of pre-assigning $h_{i(k)}$-fields in \eqref{Choi thm 2 eqn 1} is not the best strategy as shown in Section \ref{Tightness of the bound}. Moreover, it can be shown that most of $C(W_i)$'s are zero. Any $\epsilon$-perturbation could result in $C(W_i)<0$, which means a pre-assignment of physical qubits with in $W_i$. In other words, Choi's strategy is very unstable under perturbations. Here, by "$\epsilon$-perturbation", we mean that the error occur on $C(W_i)$ is at most $\epsilon$, which could be a consequence of many other factors. For instance, any error in $h_{i(k)}$ or $J_{ij}$ will definitely contribute to the error in $C(W_i)$. Therefore, we would like to emphasized the importance of admissible minor embeddings. For example, a single assignment of $h_{i(k)}$-fields does not give an admissible minor embedding in general. Note that the single assignment is equivalent to the single distribution method in \cite{Pudenz}. In fact, admissible minor embeddings become more important when we have long chains and perturbations. So far, the only admissible $h_{i(k)}$-fields are given by Choi as in Corollary \ref{Choi thm 2 eqn 1}, but it is unstable. This means that Choi's method can only be used in error-free annealers. A simple way of assigning admissible $h_{i(k)}$-fields that is $\epsilon$-stable is given at the end (See Section \ref{Conclusions and future work}).
\end{remark}
%It is not hard to see that The condition $C(W_i)\ge 0$ is
%
%Now we have a final version of the best bound.
%
%
%\begin{theorem}
%	\label{final main thm}
%	Let $h_{i(k)}$ be the local fields and $J_{i(k)}:=\sum_{l\in \operatorname{Onbh}(i(k))}|J_{l,i(k)}|$ be the non-negative external fields on $\iota(i)$ . Let $M_i$ be the constant defined in \eqref{embedding condition 2} satisfying
%	\begin{equation}
%	\label{final main thm eqn 1}
%	M_i \ge \underset{W_i}{\operatorname{max}}\left( \frac{1}{|\partial W_i|}\operatorname{min}\Big\{|h(W_i)-J(W_i)|,|h(W_i)-h_i-J(\iota(i)\backslash W_i)| \Big\}\right)\,,
%	\end{equation}
%	where the maximum is taken from all $\emptyset\ne W_i \subsetneq \iota(i)$. Then we have 
%	\begin{equation}
%	\label{final main thm eqn 2}
%	s^*_{i_p}s^*_{i_q}=1\,, \qquad \text{for all $i_p i_q\in E(\iota(G))$}\,,
%	\end{equation}
%	and 
%	\begin{equation}
%	\label{final main thm eqn 3}
%	\operatorname{min}\mathscr{E}_{I(G)}\left(s^*_{1(1)},\dots,s^*_{1(|\iota(1)|)},\dots,s^*_{N(|\iota(N)|)}\right)=\mathscr{E}_G(s^*_1,\dots,s^*_N)\,,
%	\end{equation}
%	where $s^*_k=s^*_{k(j)}$, for all $j\in \iota(k)$.
%\end{theorem}

%\section{Limiting cases}
%
%{\color{red} I am think of adding a more mathematical section regarding to the limiting cases. As you can see, there should be a continuous analogue for our theorem. But this requires some efforts, as I am reading works related to domain walls and some maths related to it. It takes some time for me to digest them.}
\section{Experimental results}
\label{Experimental results}
%\subsection{Fully connected Sherrington-Kirkpatrick spin glass on the D--Wave DW2X}
%\label{Fully connected Sherrington-Kirkpatrick spin glass}
\begin{figure}[h]
	\centering
	\includegraphics[scale=1.1]{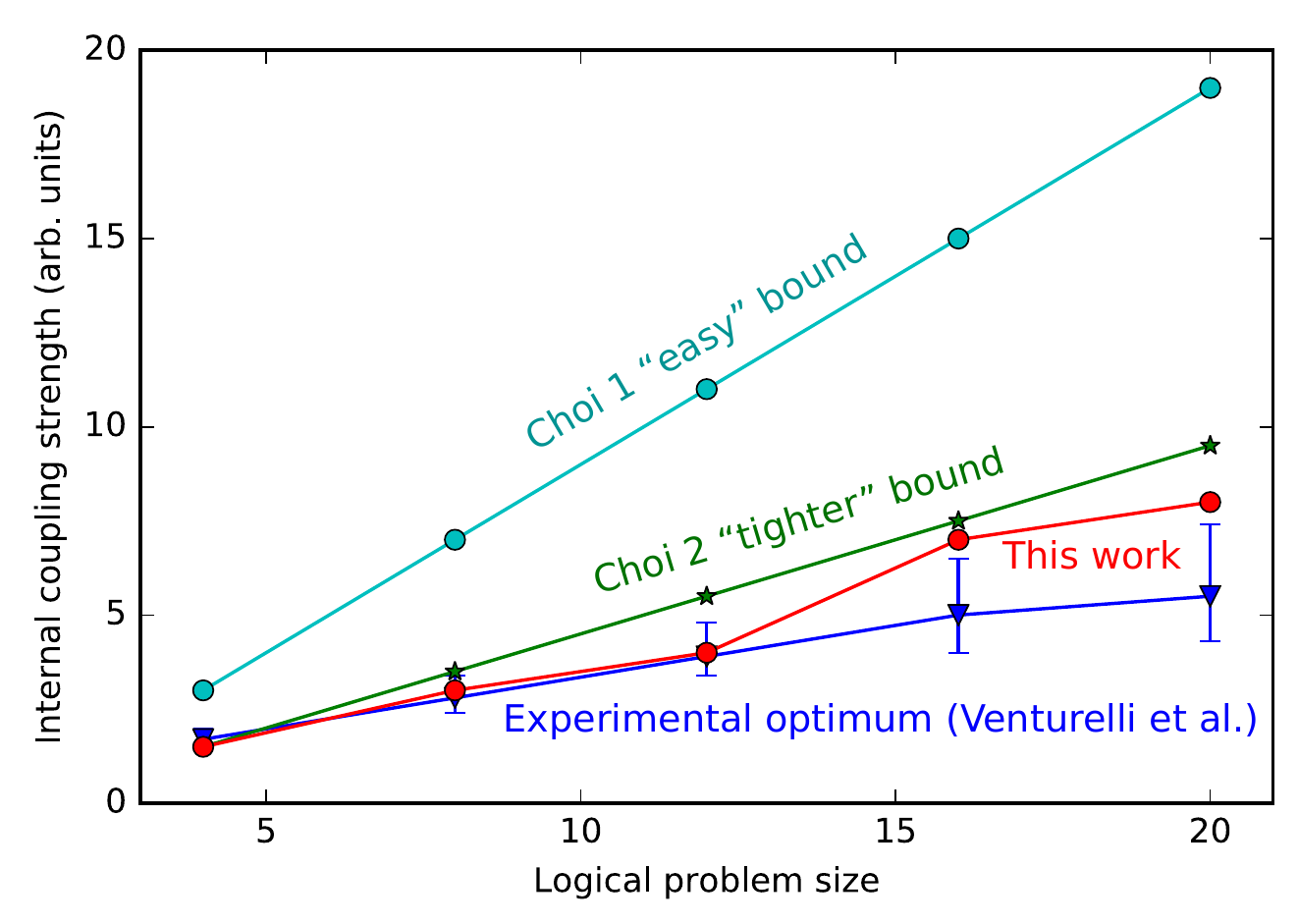}
	\caption{Experimentally-determined optimum magnitude of the internal coupling strength for the Sherrington-Kirkpatrick model on a complete graph (dark blue extracted from \cite{Venturelli2015}). The red points are bounds on the magnitude of the internal coupling strength obtained using the method introduced in Section \ref{Main results}. The pale blue and green points are obtained by Choi's first and second method respectively in \cite{Choi2008}. The lines are a guide to the eye}
	\label{Different bounds for Sherrington-Kirkpatrick spin glass on the D--Wave}
\end{figure}
In this subsection, we will compare different methods for estimating the optimum internal coupling strength to show how close they are to the experimental optima. We use the experimental data from Venturelli et.al \cite{Venturelli2015}, where fully connected Sherrington-Kirkpatrick spin-glass problems are implemented on the D--Wave DW2X machine. As we are only interested in optimal values of the internal coupling strength without broken chains, we extract the optimal values without any majority-vote post-processing. As we can see from Figure \ref{Different bounds for Sherrington-Kirkpatrick spin glass on the D--Wave}, our new tighter bound approaches more closely to the true experimental optima.

\section{Conclusions and future work}
\label{Conclusions and future work}
There are many challenges for realising a quantum annealer capable of outperforming classical computation for some classes of problems. Our work shows the importance of optimal ferromagnetic coupling strength and gives the best theoretical bound in our main theorem \ref{main thm}. However, this is valid under the condition given in our second theorem \ref{thm 2}. In fact, we can give the best bound when the logical qubit has non-negative $h_{i(k)}$-fields. Our bound is certainly tighter than Choi's bounds as shown in our toy example \ref{An example}. We have introduced the concept of admissible minor embeddings, which means that condition \eqref{embedding condition 1.1} ($\sum h_{i(k)} = h_i$) is not sufficient to guarantee an admissible minor embedding when $F_i^{pq}$ is small compared with $h_{i(k)}$. Note that having an admissible minor embedding is necessary for practical reasons. For non-admissible minor embeddings, one could in theory achieve a better bound and obtain a correct ground-state under quantum annealing, but this requires a pre-knowledge of the ground-state configuration of logical problem. 

Experimental results from quantum annealers show that our new method can be used to reduce the time-to-solution. However, this comes at a cost. The computational effort to calculate our new bound is $O(D2^L)$ per logical qubit, where $D$ is the degree of the logical qubit and $L$ is the chain length. Note that for Choi's two bounds, the computational effort are $O(D)$ and $O(DL)$ respectively.
%The assignment of $h_{i(k)}$-fields becomes important when the sizes of chains in an minor embedding are big. A very simple strategy to give admissible $h_{i(k)}$-fields is given as follows.
%\[
%h_{i(k)}=\operatorname{sgn}(h_i)\times \frac{|h_i|}{C(i)+|h_i|}\sum_{j\in \operatorname{Onbh}(i(k))} |J_{ij}|\,.
%\]
%Moreover, it is stable under perturbations with $\epsilon$ smaller than $\frac{C(i)}{C(i)+|h_i|}\min\limits_{j\in\operatorname{nbh}(i)} |J_{ij}|$. 
Finally, it still remains open how to assign admissible $h_{i(k)}$-fields to yield the best performance on actual quantum annealers.

\section*{Acknowledgements}
The research is based upon work (partially) supported by EPSRC (grant reference EP/R020159/1) and the Office of the Director of National Intelligence (ODNI), Intelligence Advanced Research Projects Activity (IARPA), via the U.S. Army Research Office contract W911NF-17-C-0050. The views and conclusions contained herein are those of the authors and should not be interpreted as necessarily representing the official policies or endorsements, either expressed or implied, of the ODNI, IARPA, or the U.S. Government. The U.S. Government is authorized to reproduce and distribute reprints for Governmental purposes notwithstanding any copyright annotation thereon.

\begin{appendices}
\section{Job-shop scheduling problems on the D--Wave 2000Q Machine}
\label{Job-shop scheduling problems on the D-Wave Machine}
We will now show some experimental results obtained on the D-Wave quantum annealer. These illustrate the dependence on the internal coupling strength $F_i^{pq}$ and show that there is an optimum value for it. In this subsection, we will use the performance of the NP-hard job-shop scheduling problem (JSP) on the D--Wave 2000Q to illustrate the importance of the best bound. Here we will follow the methodology introduced by the NASA Ames team \cite{Rieffel2014,Rieffel2015,Venturelli2016}. We will use time-to-solution as a benchmarking metric.

A typical job-shop scheduling problem (JSP) consists of a set of N jobs $J = \{j_1,\dots, j_N \}$ that must be
scheduled on a set of machines $M =\{ m_1,\dots, m_P \}$. Each job consists of a sequence of operations that
must be performed in a predefined order $j_n = \{ O_{n,1}\to O_{n,2} \to \dots \to O_{n,L_n} \}$, where each job jn has $L_n$
operations. Each operation $O_{n,k}$ has a non-negative integer execution time $\tau_{n,k}$ and has to be executed
by an assigned machine $m_{n,k}\in M$. The goal of solving JSP is to find an optimal scheduling that
minimises the makespan, i.e. the minimum time to finish all the jobs.

A generalised tabular representation of job shop scheduling problems is shown in Table 1.
\begin{table}[h]
	\centering
	\caption{M-table and P-table for JSP}
	\begin{subtable}{.5\linewidth}
		\centering
		\caption{Machine allocation}
		\begin{tabular}{| c | c | c | c | c | }
			\hline
			& $\text{Operation}_{*,1}$ & $\text{Operation}_{*,2}$ & $\dots$ & $\text{Operation}_{*,K}$ \\ \hline
			$\text{j}_1$ & $\text{m}_{1,1}$ & $\text{m}_{1,2}$ & $\dots$ & $\text{m}_{1,K}$\\ \hline
			$\text{j}_2$ & $\text{m}_{2,1}$ & $\text{m}_{2,2}$ & $\dots$ & $\text{m}_{2,K}$\\ \hline
			$\vdots$ & $\vdots$ & $\vdots$ & $\ddots$ & $\vdots$ \\ \hline
			$\text{j}_N$ & $\text{m}_{N,1}$ & $\text{m}_{N,2}$ & $\dots$ & $\text{m}_{N,K}$\\
			\hline
		\end{tabular}
	\end{subtable}%
	\\
	\begin{subtable}{.5\linewidth}
		\centering
		\caption{Time (per unit) spent on each operation}
		\begin{tabular}{| c | c | c | c | c | }
			\hline
			& $\text{Operation}_{*,1}$ & $\text{Operation}_{*,2}$ & $\dots$ & $\text{Operation}_{*,K}$ \\ \hline
			$\text{j}_1$ & $\tau_{1,1}$ & $\tau_{1,2}$ & $\dots$ & $\tau_{1,K}$\\ \hline
			$\text{j}_2$ & $\tau_{2,1}$ & $\tau_{2,2}$ & $\dots$ & $\tau_{2,K}$\\ \hline
			$\vdots$ & $\vdots$ & $\vdots$ & $\ddots$ & $\vdots$ \\ \hline
			$\text{j}_N$ & $\tau_{N,1}$ & $\tau_{N,2}$ & $\dots$ & $\tau_{N,K}$\\
			\hline
		\end{tabular}
	\end{subtable} 
\end{table}

\noindent For any job-shop scheduling problem, we can easily write it in the above representation by setting $\tau_{n,k}=0$ for non-given operations and $K=\max\limits_n L_n$. 
To translate the problem into an Ising Hamiltonian, we follow the method proposed by Venturelli et al. \cite{Venturelli2016} and assign a set of binary variables for each operation, corresponding to the various possible discrete starting times the operation can have:
$$
x_{n,k;t}=\left\{
\begin{aligned}
&1\,:\qquad \text{operation}\,\, O_{n,k}\,\, \text{starts at time}\,\, t\,,\\
&0\,:\qquad \text{otherwise}\,.
\end{aligned}
\right.
$$
Here $t$ is bounded from above by the timespan $T$, which represents the maximum time we allow for all jobs to be completed.
The resulting classical objective function (Hamiltonian) is given by
\begin{equation}
\label{total H}
H_T(x)=E_{\text{problem}} \left(h_1(x)+h_2(x) +h_3(x) + h_4(x)\right)\,, 
\end{equation}
where $E_{\text{problem}}$ is the energy scaling parameter and each penalty term is explained briefly as follows.

\begin{itemize}
	\item $h_1(x)=\sum_{n,k}\left(\sum_tx_{n,k;t}-1 \right)^2\,$, checks that an operation must start once and only once.\\
	\item $h_2(x)=\sum_{n}\sum_{k<n}\left(\sum_{t+\tau_{n,k}>t^\prime}\, x_{n,k;t}\,x_{n,k+1;t^\prime}\right)\,$, ensures that the order of the operations within a job is preserved.\\
	\item $h_3(x)=\sum_{t+\tau_{nK}>T}\,x_{n,K;t}\,$, guarantees that the last operation in each job finishes by time $T$.\\
	\item $h_4(x)=\sum_{m}\left(\sum_{(n,k;t|n^\prime,k^\prime;t^\prime)\in R_m}x_{n,k;t}\,x_{n^\prime,k^\prime;t^\prime}\right)\,$, $R_m$ consists of two penalty sets given in the following.
	\begin{itemize}
		\item Forbidding operation $O_{n^\prime,k^\prime}$ from starting at $t^\prime$ if there is another operation $O_{n,k}$ still running.
		\item Two operations cannot start at the same time, unless at least one of them has an execution time equal to zero\,.
	\end{itemize}
\end{itemize}
Due to the detailed structure of the JSP Hamiltonian, we have (from equation \eqref{def C}):
\begin{equation*}
C(i)=\frac{1}{2}E_{problem}\,,
\end{equation*}
and the spectral gap is given by
\begin{equation*}
\Delta=E_{problem}\,.
\end{equation*}
Hence, an easy follow-up from Corollary \ref{Choi thm 2} can be derived (or see \cite{Choi2008}). i.e. If topological embeddings are chosen to embed the job shop scheduling problem Hamiltonian, we find that $|F|\ge \frac{1}{2}(C(i)+\Delta)=\frac{3}{4} E_{problem}$ is a sufficient lower bound which preserves the spectral gap of the original Hamiltonian.

\begin{figure}[h]
	\centering
	\includegraphics[width=.9\textwidth]{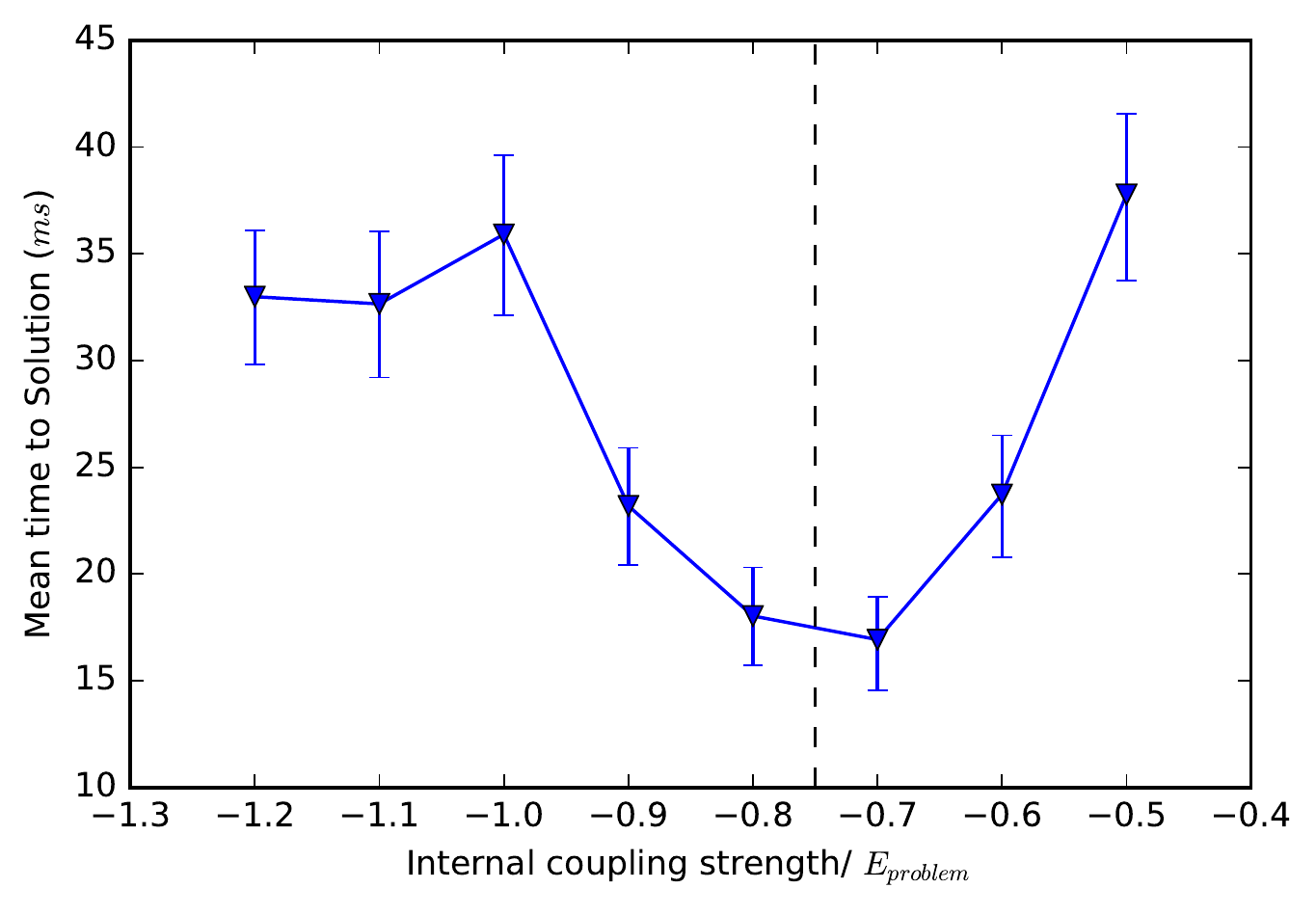}
	\caption{The graph shows the dependence of the time-to-solution on the internal coupling strength for solving a job shop scheduling problem on the D-Wave 2000Q. See the text for details. The dashed line shows the calculated value of the optimum internal coupling strength using the method of \cite{Choi2008}. The error bars are obtained by bootstrapping with $95\%$ confidence intervals.}
	\label{Dwaveresult}
\end{figure}
Theorem \ref{main thm} and Corollary \ref{Choi thm 2} is based on an ideal quantum annealer. It is clear that $l(i)$ depends only
on the number of leaves in sub-trees of a minor embedding, which is independent of the lengths of
branches within the trees. This means that in the ideal case there is no difference between short chains and long chains as long
as equations \eqref{Choi thm 2 eqn 1} and \eqref{Choi thm 2 eqn 2} are satisfied. However, due to engineering limitations, there is an upper
bound, say $\lambda$, for both logical and internal coupling strengths in the actual machine. Therefore, one has to rescale
(i.e. decrease) the strength of the logical interaction in order for it to fit into the confined range. This leads us to the existence of an optimal coupling strength for chains in reality. 

Figure \ref{Dwaveresult} shows the importance of the optimal bound in the D-Wave 2000Q machine, as the shortest time to solution is achieved close to the theoretical bound that we derived in the previous sections. The data is obtained by running 200 random JSPs with size $N=3$, $K=3$ and $T=8$ on the D-Wave 2000Q machine. For each instance five minor embeddings are randomly generated. At each value of the internal coupling strength the probability of finding the correct JSP solution is experimentally determined by running the annealer 10,000 times for each embedding. The time-to-solution (TTS) is defined as the expected time taken to find the solution with probability $p=99.9\%$ and is given by \cite{Ronnow2014}:
$$
TTS=t_a \left( \frac{\log[1-p]}{\log [1-s]}  \right)\,,
$$
where $s$ is the success probability for each embedding and $t_a$ is the single-run annealing time, which is equal to $2\mu s$ in our experiments. For each instance the minimum TTS for the five embeddings is recorded. The same procedure is conducted for the 200 random instances and then the mean TTS is the data shown in Figure \ref{Dwaveresult}. Error bars are obtained by bootstrapping method and the confidence intervals are chosen to be $95\%$.

%Moreover, the best minor embedding is chosen out of five random minor embeddings for each JSP and eight different coupling strengths are selected as shown in the Figure \ref{Dwaveresult}. The reason for choosing small-size JSPs is that the current D-Wave 2000Q is still too small to solve any hard JSP. Here the hardness is understood as those JSPs that are hard to solve on a conventional computer. 

We expect that the theoretical optimal bound plays an important role in a general quantum annealer and it is not constrained to JSPs.

\section{An example for the existence of a better bound}
\label{An example for the existence of a better bound}
Here we show that tighter bounds exists then those given in \cite{Choi2008} by continuing the toy example of Figure \ref{minorexmaple1}. According to Corollary \ref{Choi thm 2 eqn 2}, the assignments of local $h_{i(k)}$ are given as in Figure \ref{minorexmaple2}.
\begin{figure}[h]
	\centering
	\includegraphics[scale=0.3]{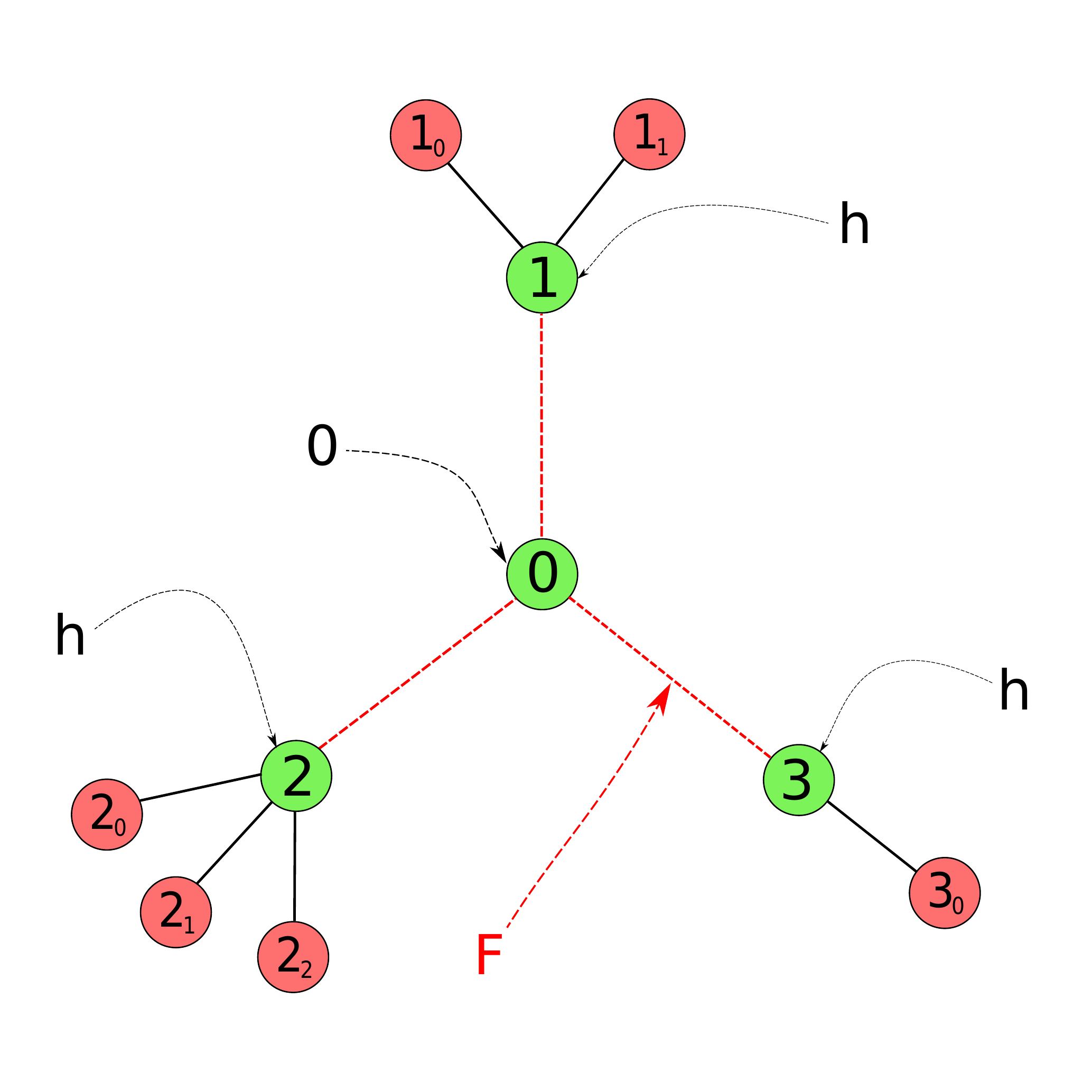}
	\caption{An example of $\iota(i)$}
	\label{minorexmaple2}
\end{figure}
Let $\left(\begin{smallmatrix} &s_1&\\&s_0& \\ s_2& &s_3\end{smallmatrix}\right)$ denote the assignments of spin values for vertices $0, 1, 2$ and $3$. For example
\[
\left(\begin{smallmatrix} &s_1&\\&s_0& \\ s_2& &s_3\end{smallmatrix}\right)
=\left(\begin{smallmatrix} &-&\\&+& \\ +& &+\end{smallmatrix}\right)
\]
means that the spin value is $-1$ for vertex $1$ and the spin values are equal to $+1$ for the other vertices.

\subsubsection*{Case 1 inequality}
Now we have the following inequalities.
\begin{equation}
\label{case1-1}
\frac{1}{2}\left[
\mathscr{E}\left(\begin{smallmatrix} &-&\\&+& \\ +& &+\end{smallmatrix}\right)
-
\mathscr{E}\left(\begin{smallmatrix} &-&\\&-& \\ -& &-\end{smallmatrix}\right)
\right]
\ge
2\times h-2\times 5h - F
=
-8h - F
\end{equation}
and
\begin{equation}
\label{case1-2}
\frac{1}{2}\left[
\mathscr{E}\left(\begin{smallmatrix} &-&\\&+& \\ +& &+\end{smallmatrix}\right)
-
\mathscr{E}\left(\begin{smallmatrix} &+&\\&+& \\ +& &+\end{smallmatrix}\right)
\right]
\ge
-h -5h -F
=
-6h - F
\,.
\end{equation}
If the configuration $\left(\begin{smallmatrix} &-&\\&+& \\ +& &+\end{smallmatrix}\right)$ is not part of the ground-state configuration, then we must have the right hand side of either inequality \eqref{case1-1} or inequality \eqref{case1-2} greater than zero. That is
\begin{equation}
\label{case1 inequality}
F<-6h\,.
\end{equation}
Due to the symmetric property of our example, we have that $\left(\begin{smallmatrix} &+&\\&+& \\ -& &+\end{smallmatrix}\right)$ and $\left(\begin{smallmatrix} &+&\\&+& \\ +& &-\end{smallmatrix}\right)$ cannot be part of the ground-state configuration if $F<-6h$.

\subsubsection*{Case 2 inequality}
Using the same method, one can derive that
\begin{equation}
\label{case2-1}
\frac{1}{2}\left[
\mathscr{E}\left(\begin{smallmatrix} &+&\\&-& \\ -& &-\end{smallmatrix}\right)
-
\mathscr{E}\left(\begin{smallmatrix} &-&\\&-& \\ -& &-\end{smallmatrix}\right)
\right]
\ge
h- 5h - F
=
-4h - F
\end{equation}
and
\begin{equation}
\label{case2-2}
\frac{1}{2}\left[
\mathscr{E}\left(\begin{smallmatrix} &+&\\&-& \\ -& &-\end{smallmatrix}\right)
-
\mathscr{E}\left(\begin{smallmatrix} &+&\\&+& \\ +& &+\end{smallmatrix}\right)
\right]
\ge
-2\times h -2\times 5h -F
=
-12h - F
\,.
\end{equation}
That is 
\begin{equation}
\label{case2 inequality}
F<-4h\,.
\end{equation}
Again, due to the symmetric property of our example, we have that $\left(\begin{smallmatrix} &-&\\&-& \\ +& &-\end{smallmatrix}\right)$ and $\left(\begin{smallmatrix} &-&\\&-& \\ -& &+\end{smallmatrix}\right)$ cannot be part of the ground-state configuration if $F<-4h$.

\subsubsection*{Case 3 inequality}
Using the same method, one can derive that
\begin{equation}
\label{case3-1}
\frac{1}{2}\left[
\mathscr{E}\left(\begin{smallmatrix} &+&\\&-& \\ +& &-\end{smallmatrix}\right)
-
\mathscr{E}\left(\begin{smallmatrix} &-&\\&-& \\ -& &-\end{smallmatrix}\right)
\right]
\ge
2\times h -2\times 5h -2\times F
=
-12h - 2F
\end{equation}
and
\begin{equation}
\label{case3-2}
\frac{1}{2}\left[
\mathscr{E}\left(\begin{smallmatrix} &+&\\&-& \\ +& &-\end{smallmatrix}\right)
-
\mathscr{E}\left(\begin{smallmatrix} &+&\\&+& \\ +& &+\end{smallmatrix}\right)
\right]
\ge
-h - 5h -2\times F
=
-6h - 2F
\,.
\end{equation}
That is 
\begin{equation}
\label{case3 inequality}
F<-3h\,.
\end{equation}
The symmetric property of our example tells us that $\left(\begin{smallmatrix} &-&\\&-& \\ +& &+\end{smallmatrix}\right)$ and $\left(\begin{smallmatrix} &+&\\&-& \\ -& &+\end{smallmatrix}\right)$ cannot be part of the ground-state configuration if $F<-3h$.

\subsubsection*{Case 4 inequality}
Using the same method, one can derive that
\begin{equation}
\label{case4-1}
\frac{1}{2}\left[
\mathscr{E}\left(\begin{smallmatrix} &+&\\&+& \\ -& &-\end{smallmatrix}\right)
-
\mathscr{E}\left(\begin{smallmatrix} &-&\\&-& \\ -& &-\end{smallmatrix}\right)
\right]
\ge
h - 5h -2\times F
=
-4h - 2F
\end{equation}
and
\begin{equation}
\label{case4-2}
\frac{1}{2}\left[
\mathscr{E}\left(\begin{smallmatrix} &+&\\&+& \\ -& &-\end{smallmatrix}\right)
-
\mathscr{E}\left(\begin{smallmatrix} &+&\\&+& \\ +& &+\end{smallmatrix}\right)
\right]
\ge
-2\times h - 2\times 5h -2\times F
=
-12h - 2F
\,.
\end{equation}
That is 
\begin{equation}
\label{case4 inequality}
F<-2h\,.
\end{equation}
According to the symmetric property of our example, we have that $\left(\begin{smallmatrix} &-&\\&+& \\ +& &-\end{smallmatrix}\right)$ and $\left(\begin{smallmatrix} &-&\\&+& \\ -& &+\end{smallmatrix}\right)$ cannot be part of the ground-state configuration if $F<-2h$.

\subsubsection*{Case 5 inequality}
Using the same method, one can derive that
\begin{equation}
\label{case5-1}
\frac{1}{2}\left[
\mathscr{E}\left(\begin{smallmatrix} &-&\\&+& \\ -& &-\end{smallmatrix}\right)
-
\mathscr{E}\left(\begin{smallmatrix} &-&\\&-& \\ -& &-\end{smallmatrix}\right)
\right]
\ge
0\times h - 3\times F
=
- 3F
\end{equation}
and
\begin{equation}
\label{case5-2}
\frac{1}{2}\left[
\mathscr{E}\left(\begin{smallmatrix} &-&\\&+& \\ -& &-\end{smallmatrix}\right)
-
\mathscr{E}\left(\begin{smallmatrix} &+&\\&+& \\ +& &+\end{smallmatrix}\right)
\right]
\ge
-3\times h - 3\times 5h -3\times F
=
-18h - 3F
\,.
\end{equation}
That is 
\begin{equation}
\label{case5 inequality}
F<0\,.
\end{equation}

\subsubsection*{Case 6 inequality}
Using the same method, one can derive that
\begin{equation}
\label{case6-1}
\frac{1}{2}\left[
\mathscr{E}\left(\begin{smallmatrix} &+&\\&-& \\ +& &+\end{smallmatrix}\right)
-
\mathscr{E}\left(\begin{smallmatrix} &-&\\&-& \\ -& &-\end{smallmatrix}\right)
\right]
\ge
3\times h -3\times 5h - 3\times F
=
-12h - 3F
\end{equation}
and
\begin{equation}
\label{case6-2}
\frac{1}{2}\left[
\mathscr{E}\left(\begin{smallmatrix} &+&\\&-& \\ +& &+\end{smallmatrix}\right)
-
\mathscr{E}\left(\begin{smallmatrix} &+&\\&+& \\ +& &+\end{smallmatrix}\right)
\right]
\ge
-0\times h -3\times F
=
- 3F
\,.
\end{equation}
That is 
\begin{equation}
\label{case6 inequality}
F<0\,.
\end{equation}
Now from inequalities \eqref{case1 inequality}, \eqref{case2 inequality}, \eqref{case3 inequality}, \eqref{case4 inequality}, \eqref{case5 inequality} and \eqref{case6 inequality}, we have that if
\begin{equation}
\label{allcases inequality}
F<-6h\,,
\end{equation}
only homogeneous configurations within $\iota(i)$ (i.e. $s_0=s_1=s_2=s_3$) are possible for the ground-state configuration. Note that this is a better bound that the one \eqref{Chois bound} given by
Corollary \ref{Choi thm 2}.

\section{Best bound on the example}
\label{Best bound on the example}
Here we show how to derive the best bound on the internal coupling strength using the toy model of Figure \ref{minorexmaple1} as an example. By Remark \ref{rmk of thm 2}, we have that the best bound is given by
\begin{equation*}
M_i=\min_{h_{i(k)}} M_i\left(h_{i(k)}\right)\,.
\end{equation*}
\begin{figure}[h]
	\centering
	\includegraphics[scale=0.3]{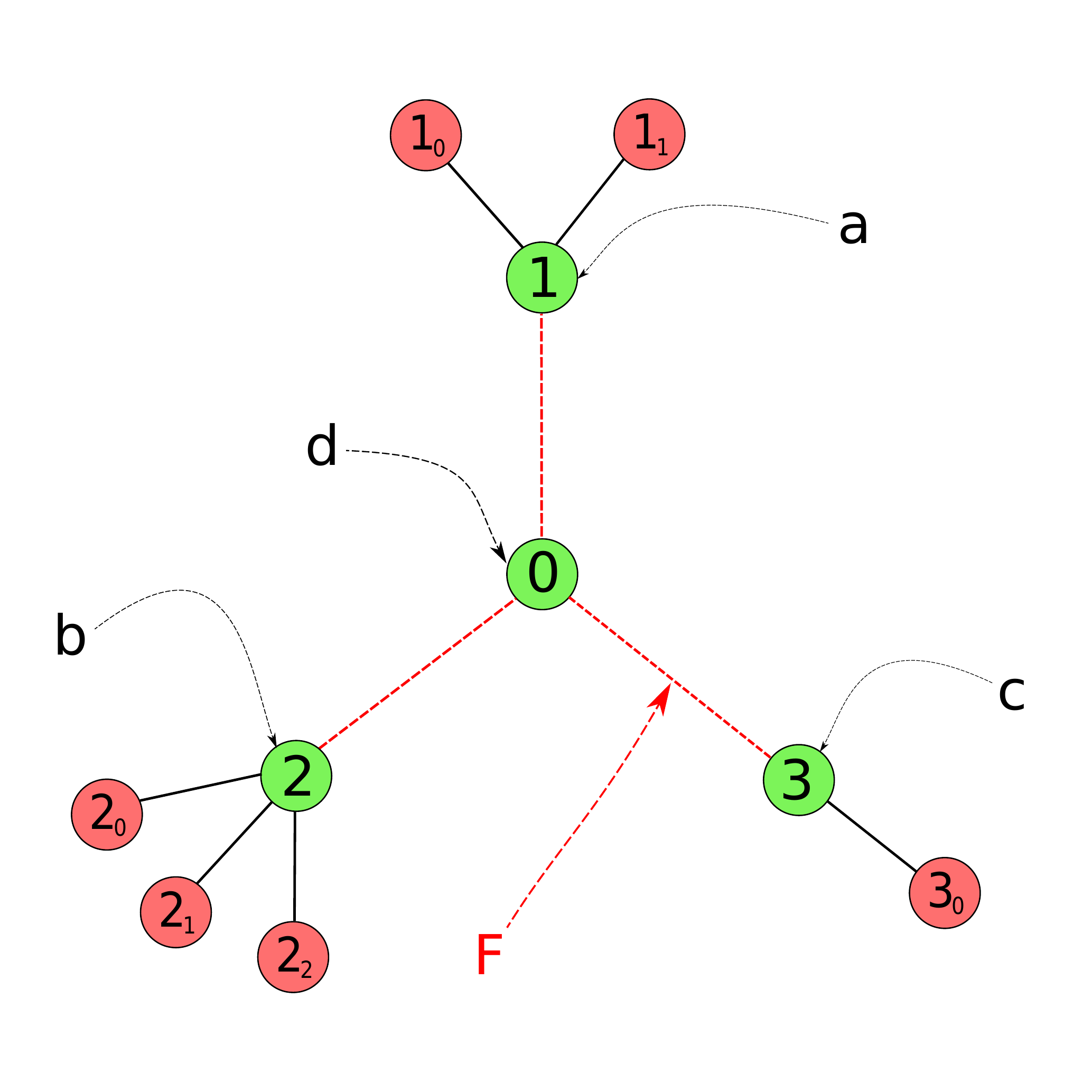}
	\caption{Example \ref{An example} with unspecified $h_{i(k)}$}
	\label{minorexmaple3}
\end{figure}
Now let $h_{i(k)}=\{a,b,c,d\}\,$ and we have the example as shown in Figure \ref{minorexmaple3}. Now follow the same method as in Appendix \ref{An example for the existence of a better bound}, we conclude from Case 1 and 2 inequalities that
\begin{equation}
\left\{
\begin{aligned}
&F<-(5h+a)\,,\\
&F<-(5h-a)\,.
\end{aligned}
\right.
\end{equation}
Therefore we have $F<-5h$ regardless of what value of $a$ takes. This shows that the best constant is $M_i=5h\,$.

\end{appendices}
\newpage

\end{document}